\documentclass[11pt]{article}
\usepackage{setspace}
\usepackage{amsfonts,amsmath,bm,xspace,amsthm,fullpage,amssymb}
\usepackage{multirow,graphicx, algorithm,algorithmic,rotating}
\usepackage{url,cite}
\usepackage{enumerate}

\usepackage{color,soul}

\usepackage{hyperref}
\usepackage[small,bf]{caption}

\newtheorem{theorem}{Theorem}
\newtheorem{lemma}{Lemma}

\newtheorem{proposition}{Proposition}

\newtheorem{definition}{Definition}

\newtheorem*{remark}{Remark}

\numberwithin{equation}{section}

%%% Math Operators
\DeclareMathOperator{\conv}{conv}

\newcommand{\A}{\mathcal{A}} 
\newcommand{\C}{\mathbb{C}}
\newcommand{\R}{\mathbb{R}}
\newcommand{\vnorm}[1]{\lVert#1\rVert}

\newcommand{\minimize}{\mathop{\operatorname{minimize}}}
\newcommand{\maximize}{\mathop{\operatorname{maximize}}}

% ---- FRENCH ----

\newcommand\numberthis{\addtocounter{equation}{1}\tag{\theequation}}

\title{\huge Near Minimax Line Spectral Estimation}

\author{Gongguo Tang$^\dagger,$ Badri Narayan Bhaskar$^\dagger,$
and Benjamin Recht$^\sharp$\\
$^\dagger$Department of Electrical and Computer Engineering\\
$^\sharp$Department of Computer Sciences\\
University of Wisconsin-Madison
}

\date{February 2013; Last Revised March 2013.}

\begin{document}

\maketitle

\vspace{-0.3in}

\begin{abstract} This paper establishes a nearly optimal algorithm for
estimating the frequencies and amplitudes of a mixture of sinusoids from noisy
equispaced samples. We derive our algorithm by viewing line spectral estimation
as a sparse recovery problem with a continuous, infinite dictionary. We show
how to compute the estimator via semidefinite programming and provide
guarantees on its mean-square error rate. We derive a complementary minimax
lower bound on this estimation rate, demonstrating that our approach nearly
achieves the best possible estimation error. Furthermore, we establish bounds
on how well our estimator localizes the frequencies in the signal, showing that
the localization error tends to zero as the number of samples grows.  We verify our theoretical results in an array of numerical experiments, demonstrating that the semidefinite programming approach outperforms two classical spectral estimation techniques.
\end{abstract}

\textbf{Keywords:}
{\small Approximate support recovery, Atomic norm, Compressive sensing,
Infinite dictionary, Line spectral estimation, Minimax rate, Sparsity, Stable
recovery, Superresolution}
\section{Introduction}

Spectrum estimation is one of the fundamental problems in statistical signal
processing. Despite of hundreds of years of research on this subject, there
still remain several fundamental open questions in this area. This paper
addresses a central one of these problems: how well can we determine the
locations and magnitudes of spectral lines from noisy temporal samples? In this
paper, we establish lower bounds on how well we can recover such signals and
demonstrate that these worst case bounds can be nearly saturated by solving a
convex programming problem. Moreover, we prove that the estimator approximately
localizes the frequencies of the true spectral lines.

We consider signals whose spectra consist of spike trains with unknown
locations in a normalized interval $\mathbb{T} = [0, 1]$. Consider $n=2m+1$
equispaced samples of a mixture of sinusoids given by
\begin{equation}
\label{signal}
x^\star_j =  \sum_{l=1}^k c_l \exp(i 2\pi j f_l)
\end{equation}
where $j \in \{-m,\dots, m\}$. We assume that the support $T = \{f_l\}_{l=1}^k
\subset \mathbb{T}$ of the $k$ frequencies and the corresponding complex
amplitudes $\{c_l\}_{l=1}^k$ are unknown. We observe noisy samples $y = x^\star
+ w$ where the noise components $w_i$ are i.i.d. centrally symmetric complex
Gaussian variables with variance $\sigma^2$. By swapping the roles of frequency
and time or space, the signal model \eqref{signal} also serves as a proper
model for superresolution imaging where we aim to localize temporal events or
spatial targets from noisy, low-frequency
measurements~\cite{cg_exact12,cg_noisy}. Our first result characterizes the
denoising error $\frac{1}{n}\vnorm{x^\star - \hat{x}}_2^2$ and is summarized in
the following theorem.

\begin{theorem}
\label{main}
Suppose the line spectral signal $x^\star$ is given by \eqref{signal} and we
observe $n$ noisy consecutive samples $y_j = x^\star_j + w_j$ where $w_j$ is
i.i.d. complex Gaussian with variance $\sigma^2$. If the frequencies
$\{f_l\}_{l=1}^k$ in $x^\star$ satisfy a minimum separation condition
\begin{equation}
\label{min-sep}
\min_{p\neq q}d(f_p,f_q) > 4/n
\end{equation}
with $d(\cdot, \cdot)$ the distance metric on the torus, then we can determine
an estimator $\hat{x}$ satisfying
\begin{equation}
\label{fast-rate}
\frac{1}{n}\vnorm{\hat{x} - x^\star}_2^2 = O\left( \sigma^2 \frac{k \log(n)}{n}\right)  
\end{equation}
with high probability by solving a semidefinite programming problem.
\end{theorem}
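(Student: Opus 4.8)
The plan is to exhibit $\hat x$ as the solution of an atomic-norm regularized least-squares problem and to run a dual-certificate analysis that upgrades the usual ``slow rate'' available for such estimators to the ``fast rate'' in \eqref{fast-rate}. Write $a(f)=(e^{i2\pi jf})_{j=-m}^{m}\in\C^{n}$ and let $\vnorm{\cdot}_{\A}$ be the atomic norm generated by $\{e^{i\phi}a(f):\phi\in[0,2\pi),\,f\in\mathbb{T}\}$---equivalently, the minimal total variation of a complex measure on $\mathbb{T}$ synthesizing a given vector. I would take the estimator to be
\[
\hat x \;=\; \arg\min_{z\in\C^{n}}\ \tfrac12\vnorm{y-z}_2^2 \;+\; \tau\,\vnorm{z}_{\A},
\]
with regularization level $\tau\asymp\sigma\sqrt{n\log n}$. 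The first thing to record is that this is a semidefinite program: by the standard Carath\'eodory-type semidefinite characterization, $\vnorm{z}_{\A}$ equals the infimum of $\tfrac{1}{2n}\tr(\mathrm{Toep}(u))+\tfrac{t}{2}$ over Hermitian Toeplitz $\mathrm{Toep}(u)$ and scalars $t$ subject to $\left[\begin{smallmatrix}\mathrm{Toep}(u)&z\\ z^{*}&t\end{smallmatrix}\right]\succeq0$, so the displayed objective is an SDP in $(z,u,t)$.

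Next I would control the noise through the dual atomic norm $\vnorm{w}_{\A}^{*}=\sup_{f\in\mathbb{T}}\abs{\vabs{a(f)}{w}}$, the supremum of a Gaussian trigonometric polynomial of degree $n$. A chaining (Dudley) bound---or a union bound over an $O(1/n^{2})$-net of $\mathbb{T}$ together with a Bernstein-type inequality for trigonometric polynomials to pass from the net to the whole torus---gives $\E\,\vnorm{w}_{\A}^{*}\lesssim\sigma\sqrt{n\log n}$ with sub-Gaussian concentration, so that $\vnorm{w}_{\A}^{*}\le\tau$ on an event $\mathcal E$ of probability at least $1-n^{-c}$. Working on $\mathcal E$, optimality of $\hat x$ gives the basic inequality
\[
\tfrac12\vnorm{h}_2^2 \;\le\; \Re\vabs{w}{h} \;+\; \tau\bigl(\vnorm{x^{\star}}_{\A}-\vnorm{\hat x}_{\A}\bigr),\qquad h:=\hat x-x^{\star},
\]
and the whole problem reduces to showing the right-hand side is $O(k\tau^{2}/n)$, since then $\tfrac1n\vnorm{h}_2^2\lesssim k\tau^{2}/n^{2}\asymp\sigma^{2}k\log(n)/n$.

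This reduction is where the minimum-separation hypothesis \eqref{min-sep} is used. Under $\min_{p\ne q}d(f_p,f_q)>4/n$, the Cand\`es--Fern\'andez-Granda construction supplies a dual certificate $q\in\C^{n}$ whose polynomial $Q(f)=\vabs{a(f)}{q}$ interpolates the signs, $Q(f_l)=c_l/\abs{c_l}$, has $\abs{Q}\le1$ on $\mathbb{T}$ and $Q'(f_l)=0$, and---crucially---obeys the quantitative deficit $1-\abs{Q(f)}\gtrsim\min(1,\,n^{2}d(f,T)^{2})$. Writing $h=\int a(f)\,d\nu(f)$ with the error measure $\nu=\hat\mu-\mu^{\star}$ ($\hat\mu$ an optimal at-most-$n$-atom representation of $\hat x$, $\mu^{\star}=\sum_l c_l\delta_{f_l}$), and using $\vnorm{x^{\star}}_{\A}=\Re\int\overline{Q}\,d\mu^{\star}$ together with $\vnorm{\hat x}_{\A}\ge\Re\int\overline{Q}\,d\hat\mu+\int(1-\abs{Q})\,d\abs{\hat\mu}$, the basic inequality becomes
\[
\tfrac12\vnorm{h}_2^2 \;+\; \tau\!\int(1-\abs{Q})\,d\abs{\hat\mu} \;\le\; \Re\langle w-\tau q,\,h\rangle .
\]
The deficit term plays the role of a restricted-strong-convexity surrogate: it charges for any mass of $\abs{\hat\mu}$ away from $T$ and, through the $n^{2}d(f,T)^{2}$ factor, for near-mass that sits $\Omega(1/n)$ off the true frequencies. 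The plan for the right side is then to decompose the error into its component in $S:=\mathrm{span}\{a(f_l),a'(f_l)\}_l$, a space of dimension at most $2k$ on which the orthogonal projection of $w$ has norm only $O(\sigma\sqrt{k})$ (because a well-separated, possibly confluent, Vandermonde system is well conditioned), and a complementary component that is absorbed by the deficit term together with the quadratic Taylor remainders of $a(\cdot)$ about the $f_l$; balancing the two contributions yields $\vnorm{h}_2^2\lesssim k\tau^{2}/n$ and hence \eqref{fast-rate}.

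I expect the last balancing step to be the main obstacle. The delicate points are (i) making rigorous the passage from the infinite dictionary $\{a(f)\}$ to the effective at most $2k$-dimensional model $S$---i.e.\ proving the compatibility / restricted-eigenvalue estimate, including the claim that the curvature of $Q$ genuinely dominates (not merely matches) the quadratic Taylor remainder even for near-mass that is $\Theta(1/n)$ off $T$, where the first-order term $a'(f_l)$ is already comparable in size to $a(f_l)$ itself---and (ii) verifying that $\{a(f_l),a'(f_l)\}_l$ is well conditioned uniformly in $k$, which needs the sharp conditioning bounds for well-separated Fourier systems rather than a crude Gershgorin argument. A cleaner but less self-contained route to (i)--(ii) is to first discretize $\mathbb{T}$ to an $O(1/n^{2})$ grid---legitimate since every object in play is a degree-$n$ trigonometric polynomial, so only lower-order terms are discarded---and then apply off-the-shelf restricted-eigenvalue theory to the resulting $\ell_{1}$-penalized least-squares problem with $p\asymp n^{2}$ columns.
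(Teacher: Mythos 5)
Your setup coincides with the paper's: AST with $\tau\asymp\sigma\sqrt{n\log n}$, the Toeplitz SDP reformulation, the dual-atomic-norm bound $\sup_f\abs{\vabs{w}{a(f)}}\le\tau$ on a high-probability event, the Cand\`es--Fern\'andez-Granda dual certificate with a quadratic $1-|Q(f)|\gtrsim n^2 d(f,T)^2$ deficit, and the basic inequality with the deficit term moved to the left. The divergence is in the final ``balancing'' step, and here your plan is both different from the paper's and genuinely gapped. You propose to project $w$ onto $S=\mathrm{span}\{a(f_l),a'(f_l)\}$, bound $\|P_Sw\|=O(\sigma\sqrt{k})$, and absorb $\vabs{w}{P_{S^\perp}h}$ into the deficit plus Taylor remainders. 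But $\vabs{w}{P_{S^\perp}h}$ rewrites as $\int\vabs{P_{S^\perp}w}{a(f)}\,d\nu(f)$, and after Taylor expansion around each $f_j$ you need a uniform bound on the trigonometric polynomial $\tilde W(f)=\vabs{P_{S^\perp}w}{a(f)}$; this is $\vabs{w}{a(f)}$ minus $\vabs{P_Sw}{a(f)}$, and the subtracted term is a fixed $2k$-dimensional combination of Dirichlet-kernel-type functions whose sup norm is not obviously $O(\tau)$ when $k\gg\log n$. Controlling it pushes you right back into quantitative conditioning estimates for the confluent Vandermonde system $\{a(f_l),a'(f_l)\}$ --- exactly the issue you flag in point (ii) --- and the discretize-to-a-grid alternative you mention would then require an infinite-dictionary compatibility condition that the standard $\ell_1$ literature does not provide for grids as coherent as $O(1/n^2)$-spaced Fourier atoms.

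The paper sidesteps all of this. It never projects $w$ onto a subspace and never needs conditioning of the Vandermonde system. Instead it writes $\|e\|_2^2=\int_0^1 E(f)\,\nu(df)$ with $E(f)=\vabs{e}{a(f)}$, proves a Taylor-type lemma showing $\int X\,\nu\le\|X\|_\infty\bigl(\int_F|\nu|+I_0+I_1+I_2\bigr)$ for any degree-$m$ trigonometric polynomial $X$ (with $I_\ell$ the $\ell$th local moments of $\nu$ near $T$), bounds $\|E\|_\infty\le3\tau$ from optimality plus the noise dual-norm bound, and then works entirely with the scalar quantities $I_0,I_1,I_2,\int_F|\nu|$. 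Two auxiliary dual polynomials (the sign-interpolating $Q$ and the linearizing $Q_1$) give $I_0,I_1\lesssim k\tau/n+I_2+\int_F|\nu|$, and a two-sided sandwich on $\|P_{T^c}\nu\|_{\mathrm{TV}}-\|P_T\nu\|_{\mathrm{TV}}$ (lower bound from the quadratic deficit of $Q$, upper bound from optimality together with the same Taylor lemma applied to $\vabs{w}{a(f)}$) gives $I_2+\int_F|\nu|\lesssim k\tau/n$ once $\eta$ is large enough. Multiplying $\|E\|_\infty\cdot(k\tau/n)$ produces $k\tau^2/n$ directly. The $Ck/n$ factor you obtain from $\tau\vabs{q}{h}$ appears in the paper as $\|Q\|_{L^1}\le Ck/n$ via Parseval/H\"older, so that piece of your argument is on the money; it is the treatment of $\vabs{w}{h}$ where your route requires the extra machinery the paper avoids.
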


Note that if we exactly knew the frequencies $f_j$, the best rate of estimation
we could achieve would be $O(\sigma^2 k / n)$~\cite{oracle_lasso}. Our upper
bound is merely a logarithmic factor larger than this rate. On the other hand,
we will demonstrate via minimax theory that a logarithmic factor is unavoidable
when the support is unknown. Hence, our estimator is nearly minimax optimal.

It is instructive to compare our stability rate to the optimal rate achievable
for estimating a sparse signal from a finite, discrete
dictionary~\cite{cd_minimax}. In the case that there are $p$ incoherent
dictionary elements, no method can estimate a $k$-sparse signal from $n$
measurements corrupted by Gaussian noise at a rate less than $O(\sigma^2
\frac{k\log(p/k)}{n})$. In our problem, there are an infinite number of
candidate dictionary elements and it is surprising that we can still achieve
such a fast rate of convergence with our highly coherent dictionary. We
emphasize that none of the standard techniques from sparse approximation can be
immediately generalized to our case. Not only is our dictionary infinite, but
also it does not satisfy the usual assumptions such as restricted eigenvalue
conditions~\cite{rest_eig} or coherence conditions~\cite{coherence} that are
used to derive stability results in sparse approximation. Nonetheless, in terms
of mean-square error performance, our results match those obtained when the
frequencies are restricted to lie on a discrete grid.

In the absence of noise, polynomial interpolation can exactly recover a line
spectral signal of $k$ \emph{arbitrary} frequencies with as few as $2 k$
equispaced measurements. In the light of our minimum frequency separation
requirement~\eqref{min-sep}, why should one favor convex techniques for line
spectral estimation? Our stability result coupled with minimax optimality
establish that no method can perform better than convex methods when the
frequencies are well-separated. And, while polynomial interpolation and
subspace methods do not impose any resolution limiting assumptions on the
constituent frequencies, these methods are empirically highly sensitive to
noise. To the best of our knowledge, there is no result similar to
Theorem~\ref{main} that provides finite sample guarantees about the noise
robustness of polynomial interpolation techniques.

Additionally, little is known about how well spectral lines can be localized
from noisy observations. The frequencies estimated by any method will never
exactly coincide with the true frequencies in the signal in the presence of
noise. However, we can characterize the localization performance of our convex
programming approach, and summarize this performance in Theorem~\ref{support}.

Before stating the theorem, we introduce a bit of notation. Define
neighborhoods $N_j$ around each frequency $f_j$ in $x^\star$ by $N_j := \{ f
\in \mathbb{T} : d(f,f_j) \leq 0.16/n\}$. Also define $F = \mathbb{T}\backslash
\cup_{j=1}^k N_j$ as the set of frequencies in $\mathbb{T}$ which are not near
any true frequency. The letters $N$ and $F$ denote the regions that are
\emph{near} to and \emph{far} from the true supporting frequencies. The
following theorem summarizes our localization guarantees.

\begin{theorem}
\label{support}
Let $\hat{x}$ be the solution to the same semidefinite programming (SDP)
problem as referenced in Theorem~\ref{main} and $n > 256$. Let $\hat{c_l}$ and
$\hat{f}_l$ form the decomposition of $\hat{x}$ into coefficients and
frequencies, as revealed by the SDP. Then, there exist fixed numerical
constants $C_1,C_2$ and $C_3$ such that with high probability
\begin{enumerate}[i.)]
\item $\sum_{l : \hat{f}_l \in F} |\hat{c}_l| \leq  C_1 \sigma\sqrt{\frac{k^2 \log(n)}{n}}$
\item $\sum_{l : \hat{f}_l \in N_j} |\hat{c}_l| \left\{ \min_{f_j \in T} d(f_j,\hat{f}_l) \right\}^2 \leq  C_2 \sigma\sqrt{\frac{k^2 \log(n)}{n}}$
\item $\left| c_j - \sum_{l : \hat{f_l} \in N_j} \hat{c}_l \right| \leq C_3 \sigma\sqrt{\frac{k^2 \log(n)}{n}}$.
\item If for any frequency $f_j$, the corresponding amplitude $|c_j| > C_1
\sigma \sqrt{\frac{ k^2 \log(n)}{n}}$, then with high probability there exists
a corresponding frequency $\hat{f}_j$ in the recovered signal such that,
\[
\left| f_j - \hat{f}_j \right|  \leq \frac{\sqrt{C_2/C_1}}{n}\left(\frac{|c_j|}{C_1 \sigma \sqrt{\frac{ k^2 \log(n)}{n}}} - 1\right)^{-\tfrac{1}{2}}
\]
\end{enumerate}
\end{theorem}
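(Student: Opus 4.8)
The plan is to convert the global error bound of Theorem~\ref{main} into the stated local estimates by pairing the error against a deterministic dual certificate and against a handful of spatially localized trigonometric polynomials. Write $E:=\hat x-x^\star$, let $\hat\mu=\sum_l\hat c_l\delta_{\hat f_l}$ and $\mu^\star=\sum_j c_j\delta_{f_j}$ be the measures carried by $\hat x$ and $x^\star$, and note that for a degree-$m$ trigonometric polynomial $p$ with coefficient vector $\bm p$ one has $\vabs{E}{\bm p}=\int p\,d(\hat\mu-\mu^\star)$ and $\abs{\vabs{E}{\bm p}}\le\vnorm{E}_2\,\vnorm{p}_{L^2(\mathbb T)}$ by Parseval and Cauchy--Schwarz. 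Set $\varepsilon:=\sigma\sqrt{k^2\log(n)/n}$; Theorem~\ref{main} gives $\vnorm{E}_2=O(\sigma\sqrt{k\log n})$ on a high-probability event, so any $p$ with $\vnorm{p}_{L^2}=O(\sqrt{k/n})$ has $\abs{\vabs{E}{\bm p}}=O(\varepsilon)$. We will also use the high-probability bounds $\vnorm{W}_\infty=O(\sigma\sqrt{n\log n})$, $\vnorm{W'}_\infty=O(n\sigma\sqrt{n\log n})$, $\vnorm{W''}_\infty=O(n^{2}\sigma\sqrt{n\log n})$ for the noise trigonometric polynomial $W$ (Gaussian process / Bernstein estimates).

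Under the separation~\eqref{min-sep} I would import the Cand\`es--Fern\'andez-Granda construction~\cite{cg_exact12}: a degree-$m$ polynomial $q_0$ with $\vnorm{q_0}_\infty\le1$, $q_0(f_j)=\overline{\operatorname{sgn}(c_j)}$, $q_0'(f_j)=0$, $\abs{q_0(f)}\le1-C_a n^{2}d(f,f_j)^{2}$ near each $f_j$ (in particular on every $N_j$), and $\abs{q_0(f)}\le1-C_b$ on $F$; together with the companion ``interpolation bumps'' $\psi_j$ (with $\psi_j(f_{j'})=\delta_{jj'}$, $\psi_j'(f_{j'})=0$) and $\phi_j$ (with $\phi_j(f_{j'})=0$, $\phi_j'(f_{j'})=\delta_{jj'}$). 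Since all of these are built from translates of a squared-Fej\'er-type kernel, they concentrate at scale $1/n$ about the $f_j$'s, so $\vnorm{q_0}_{L^2}^{2}=O(k/n)$, $\vnorm{\sum_j s_j\psi_j}_{L^2}^{2}=O(k/n)$ for unimodular $s_j$, $\vnorm{\sum_j t_j\phi_j}_{L^2}^{2}=O(k/n)$ whenever $\abs{t_j}\le n$, while $\vnorm{q_0''}_\infty,\vnorm{\psi_j''}_\infty=O(n^{2})$ and $\vnorm{\phi_j''}_\infty=O(n)$.

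The argument then rests on two elementary inequalities. Since the SDP produces $\hat x$ together with a decomposition attaining $\vnorm{\hat x}_\A=\sum_l\abs{\hat c_l}$, feasibility of $q_0$ yields the curvature bound $\vnorm{\hat x}_\A-\Re\vabs{\hat x}{q_0}\ge\sum_l\abs{\hat c_l}\bigl(1-\abs{q_0(\hat f_l)}\bigr)\ge C_b S+C_a R$, where $S:=\sum_{l:\hat f_l\in F}\abs{\hat c_l}$ and $R:=n^{2}\sum_j\sum_{l:\hat f_l\in N_j}\abs{\hat c_l}\,d(\hat f_l,f_j)^{2}$, and $\Re\vabs{x^\star}{q_0}=\vnorm{x^\star}_\A$. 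Combined with the optimality inequality $\tfrac12\vnorm{E}_2^{2}\le\Re\vabs{E}{w}+\tau(\vnorm{x^\star}_\A-\vnorm{\hat x}_\A)$ for the regularized least-squares problem, this gives
\[
\tfrac12\vnorm{E}_2^{2}+\tau C_b S+\tau C_a R\ \le\ \Re\vabs{E}{w-\tau q_0}.
\]
The $q_0$-term on the right is $O(\tau\varepsilon)$ by Cauchy--Schwarz and $\vnorm{q_0}_{L^2}=O(\sqrt{k/n})$. For $\Re\vabs{E}{w}=\int W\,d(\hat\mu-\mu^\star)$ I would peel off the mass on $F$ (contributing $\le\vnorm{W}_\infty S$) and Taylor-expand $W$ about $f_j$ on each $N_j$, so that the expression is controlled by $\vnorm{W}_\infty S+\vnorm{W}_\infty\sum_j\abs{\Delta_j}+\vnorm{W'}_\infty\sum_j\abs{\mu_{1,j}}+\vnorm{W''}_\infty R/n^{2}$, with $\Delta_j:=\sum_{N_j}\hat c_l-c_j$ and $\mu_{1,j}:=\sum_{N_j}\hat c_l(\hat f_l-f_j)$. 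Testing $E$ against $\sum_j\overline{\operatorname{sgn}(\Delta_j)}\,\psi_j$ and against $n\sum_j\overline{\operatorname{sgn}(\mu_{1,j})}\,\phi_j$ — the node conditions isolating $\Delta_j$ respectively $\mu_{1,j}$, the localized $L^2$-norms bounding the Cauchy--Schwarz factor, and the second-derivative Bernstein bounds bounding the Taylor remainders — shows $\sum_j\abs{\Delta_j}$ and $n\sum_j\abs{\mu_{1,j}}$ are each $\le O(\varepsilon)+O(S)+O(R)$. Plugging in the estimates for $\vnorm{W^{(r)}}_\infty$ then gives $\abs{\vabs{E}{w}}\le\sigma\sqrt{n\log n}\,(c_1 S+c_2 R+c_3\varepsilon)$ for absolute constants $c_i$. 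Choosing $\tau$ a sufficiently large constant times $\sigma\sqrt{n\log n}$ lets the $S$- and $R$-contributions be absorbed into the left-hand side, leaving $\tfrac14\vnorm{E}_2^{2}+\tfrac12\tau C_b S+\tfrac12\tau C_a R=O(\tau\varepsilon)$; this is precisely (i) ($S=O(\varepsilon)$) and (ii) ($R=O(\varepsilon)$), and is consistent with Theorem~\ref{main}. Statement (iii) follows from the same $\Delta_j$-estimate, now $O(\varepsilon)$, so $\abs{c_j-\sum_{N_j}\hat c_l}=O(\varepsilon)$ for each $j$; and for (iv), if $\abs{c_j}>C_1\varepsilon$ then $\sum_{N_j}\abs{\hat c_l}\ge\abs{\sum_{N_j}\hat c_l}\ge\abs{c_j}-C_3\varepsilon>0$, so some recovered $\hat f_l$ lies in $N_j$; taking $\hat f_j$ the closest such, (ii) gives $d(\hat f_j,f_j)^{2}\bigl(\abs{c_j}-C_3\varepsilon\bigr)\le C_2\varepsilon/n^{2}$, which rearranges to the stated inequality.

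The step I expect to be the main obstacle is the sharp bound on $\Re\vabs{E}{w}$: the crude estimate $\vnorm{E}_2\vnorm{w}_2=O(\sigma\sqrt n\,\vnorm{E}_2)$ overshoots by a factor of order $\sqrt{n/k}$ and yields only a slow rate, so one genuinely needs the near/far decomposition of $\hat\mu-\mu^\star$, the Taylor expansion of $W$ with its first-order part controlled through the bumps $\phi_j$, and the fact that the relevant interpolation polynomials are, up to negligible tails, supported on sets of measure $O(k/n)$ and hence have $L^2$-norm $O(\sqrt{k/n})$ — this is what turns $\vnorm{E}_2=O(\sigma\sqrt{k\log n})$ into the $\sigma\sqrt{k^2\log n/n}$ localization scale. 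A secondary difficulty is that the bound on $\Re\vabs{E}{w}$ is itself expressed through $S$, $R$ and the $\Delta_j,\mu_{1,j}$, so these must all be carried as unknowns in one inequality and absorbed simultaneously, which is why $\tau$ must be a large enough constant times $\sigma\sqrt{n\log n}$ (and $n>256$) for the numerical constants of the certificate to leave room.
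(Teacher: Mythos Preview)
Your proposal is correct and follows essentially the same skeleton as the paper: extract the far mass $S$ and the near second moment $R$ from the optimality gap via the curvature of the dual certificate, Taylor-expand the noise integral on each $N_j$, and control the zeroth and first near-region moments ($\sum_j|\Delta_j|$ and $n\sum_j|\mu_{1,j}|$) by testing against auxiliary interpolation polynomials; then absorb with $\tau=\eta\sigma\sqrt{n\log n}$ for $\eta$ large. The paper packages these as Lemmas~\ref{part1}--\ref{part3}, with parts (i) and (ii) read off directly from Lemma~\ref{part3} and part (iii) obtained from a single localized polynomial $Q_j^\star$ (your $\psi_j$).

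The one substantive technical difference is the H\"older pairing you use to bound $\left|\int Q\,d\nu\right|=\abs{\vabs{E}{q}}$. You use Cauchy--Schwarz, $\vnorm{E}_2\vnorm{Q}_{L^2}$, and import $\vnorm{E}_2=O(\sigma\sqrt{k\log n})$ from Theorem~\ref{main}; the paper instead uses $\vnorm{Q}_{L^1}\vnorm{E(f)}_\infty$, with $\vnorm{Q}_{L^1}=O(k/n)$ (Lemma~\ref{l1}) and $\vnorm{E(f)}_\infty\le 3\tau$ coming directly from the dual feasibility of $(y-\hat x)/\tau$, not from Theorem~\ref{main}. Both pairings yield $O(k\tau/n)=O(\varepsilon)$, so the arithmetic is identical, but the paper's route makes Lemma~\ref{part3} (hence parts (i)--(ii)) logically independent of Theorem~\ref{main}, whereas your argument takes Theorem~\ref{main} as an input. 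That is a harmless reorganization, not a gap.
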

  
Part (i) of Theorem~\ref{support} shows that the estimated amplitudes
corresponding to frequencies far from the support are small. In practice, we
note that we rarely find any spurious frequencies in the far region, suggesting
that our bound (i) is conservative. Parts (ii) and (iii) of the theorem show
that in a neighborhood of each true frequency, the recovered signal has
amplitude close to the true signal. Part (iv) shows that the larger a
particular coefficient is, the better our method is able to estimate the
corresponding frequency. In particular, note that if $|c_j| > 2 C_1 \sigma
\sqrt{\frac{ k^2 \log(n)}{n}}$, then $\left| f_j - \hat{f}_j \right| \leq
\frac{\sqrt{C_2/C_1} }{n}$. In all four parts, note that the localization error
goes to zero as the number of samples grows.

We proceed as follows. In Section~\ref{sec:line-spec}, we begin by
contextualizing our result in the canon of line spectral estimation. We
emphasize the advantages and shortcomings of prior art, and describe the
methods on which our analysis is built upon. We then in
Section~\ref{sec:atomic-norms} describe the semidefinite programming approach
to line spectral estimation, originally introduced in~\cite{btr12}, and explain
how it relates to other recent spectrum estimation algorithms. We present
minimax lower-bounds for line spectral estimation in Section~\ref{sec:minimax}.
We then provide the proofs of our main results in Section~\ref{sec:proofs}.
Finally, in Section~\ref{sec:experiments}, we empirically demonstrate that the
semidefinite programming approach outperforms MUSIC~\cite{music} and Cadzow's
technique~\cite{cadzow05} in terms of the localization metrics defined by parts
(i), (ii) and (iii) of Theorem~\ref{support}.

\section{Prior Art in Line Spectral Estimation}
\label{sec:line-spec}

To date, line spectral analysis may be broadly classified into two camps.
\emph{Subspace methods}~\cite{music,esprit,cadzow05,ssa} build upon polynomial
interpolation~\cite{prony1795} and exploit certain low rank structure in the
spectrum estimation problem for denoising. Research on subspace approaches has
yielded several standard algorithms that are widely deployed and shown to
achieve Cram\'{e}r-Rao bound asymptotically~\cite{fri,cramer-subspace}.
However, the sensitivity to noise and model order is not well understood, and
there are few guarantees of how these algorithms perform given a limited number
of noisy measurements. For a review of many of these classical approaches, see
for example~\cite{StoicaMoses}.

More recently, approaches based on convex optimization have gained favor and
have been demonstrated to perform well on a variety of spectrum estimation
tasks~\cite{malioutov05,bourguignon2007irregular,baraniuk2010model,zweig2003irre
gular}. These convex programming methods restrict the frequencies to lie on a
finite grid of points and view line spectral signals as a sparse combination of
single frequencies. While these methods are reported to have significantly
better localization properties than subspace methods (see for
example,~\cite{malioutov05}) and admit fast and robust algorithms, they have
two significant drawbacks. First, while finer gridding may lead to better
performance, very fine grids are often numerically unstable. Furthermore,
traditional compressed sensing theory does not adequately characterize the
performance of fine gridding in these algorithms as the dictionary becomes
highly coherent.

Some very recent work~\cite{btr12,cg_exact12,cg_noisy} bridges the gap between
the performant discretized algorithms and continuous subspace approaches by
developing a new theory of convex relaxations for infinite continuous
dictionary of frequencies. Our work in~\cite{btr12} applies the atomic norm
framework proposed by Chandrasekaran et al~\cite{crpw} to the line spectral
estimation problem. There, we established stability results on the denoising
error and demonstrated empirically that our algorithm compared favorably with
both the classical and recent convex approaches which assume the frequencies
are on an oversampled DFT grid. Our prior results made no assumption about the
separation between frequencies. When the frequencies are well separated, the
current work demonstrates that much faster convergence rates are achieved.

Our work is closely related to recent results established by Cand\`es and
Fernandez-Granda~\cite{cg_exact12} on exact recovery using convex methods and
their recent work~\cite{cg_noisy} on exploiting the robustness of their dual
polynomial construction to show super-resolution properties of convex methods.
The total variation norm formulation used in~\cite{cg_noisy} is equivalent to
the atomic norm specialized to the line spectral estimation problem.

Robustness bounds were established in both our earlier work~\cite{btr12} and in
the work of Cand\`es and Fernandez-Granda~\cite{cg_noisy}. In~\cite{btr12}, a
slow convergence rate was established with no assumptions about the separation
of frequencies in the true signal. In~\cite{cg_noisy}, the authors provide
guarantees on the $L_1$ energy of error in the frequency domain in the case
that the frequencies are well separated. The noise is assumed to be adversarial
with a small $L_1$ spectral energy. In contrast, our paper shows near minimax
denoising error under Gaussian noise. It is also not clear that there is a
computable formulation for the optimization problem analyzed
in~\cite{cg_noisy}. While the guarantees the authors derive in~\cite{cg_noisy}
are not comparable with our results, several of their mathematical
constructions are used in our proofs here.

Additional recent work derives conditions for approximate support recovery
under the Gaussian noise model using the Beurling-Lasso ~\cite{azais}. There,
the authors show that there is a true frequency in the neighborhood of every
estimated frequency with large enough amplitude. We note that the
Beurling-Lasso is equivalent to the atomic norm algorithm that we analyze in
this paper. A more recent paper by Fernandez-Granda{\cite{granda2}} improves
this result by giving conditions on recoverability in terms of the true signal
instead of the estimated signal and prove a theorem similar to
Theorem~\ref{support}, but use a worst case $L_2$ bound on the noise samples.
Here, we improve these recent results in our proof of Theorem~\ref{support},
providing tighter guarantees under the Gaussian noise model.
 
\section{Frequency Localization using Atomic Norms}
\label{sec:atomic-norms}

We describe more precisely our signal model in this section. Suppose we wish to
estimate the amplitudes and frequencies of a signal $x(t), t \in \R$ given as a
mixture of $k$ complex sinusoids:
\begin{equation*}
  x ( t) =  \sum_{l = 1}^k c_l \exp ( i 2 \pi f_l t)
\end{equation*}
where $\{ c_l \}_{l = 1}^k$ are unknown complex amplitudes corresponding to
the $k$ unknown frequencies $\{ f_l \}_{l = 1}^k$ assumed to be in the torus
$\mathbb{T} = [0, 1]$. Such a signal may be thought of as a normalized band
limited signal and has a Fourier transform given by a line spectrum:
\begin{equation}
\label{mu}
\mu(f) = \sum_{l=1}^k c_l\delta(f - f_l)
\end{equation}
Denote by $x^\star$ the $n = 2m+1$ dimensional vector composed of equispaced 
Nyquist samples $\{x(j)\}_{j=-m}^m$   for $j=-m,\ldots,m$.

The goal of line spectral estimation is to estimate the frequencies and 
amplitudes of the signal $x(t)$ from the finite, noisy samples $y \in \C^n$ 
given by
\[
  y_j  =  x_j^\star + w_j\\
\]
for $-m \leq j \leq m$, where $w_j \sim \mathcal{C}\mathcal{N}(0,\sigma^2)$ is 
i.i.d. circularly symmetric complex Gaussian noise. 

\subsection{Algorithm: Atomic Norm Soft Thresholding (AST)}\label{sec:algorithms}

We can model the line spectral observations $x^\star = [x_{-m}^\star,
\ldots,x_{m}^\star]^T \in \C^n$ as a sparse combination of ``atoms'' $a(f)$ 
which correspond to observations due to single frequencies.  Define the vector 
$a(f) \in \C^n$ for any $f \in \mathbb{T} = [0,1]$ by
\[
a(f) = \begin{bmatrix}
e^{i2\pi (-m)f}\\
\vdots\\
1\\
\vdots\\
e^{i2\pi m f}
\end{bmatrix} \in \C^n.
\]
Then, we rewrite model \eqref{signal} as follows:
\begin{align}
\label{model}x^\star = \sum_{l=1}^k c_l a(f_l) = \sum_{l=1}^k |c_l| 
a(f_l)e^{i\phi_l}
\end{align}
where $\phi_l = c_l/|c_l|$ is the phase of the $l$th component. So, the target 
signal $x^\star$ may be viewed as a sparse non-negative combination of elements 
from the atomic set $\A$ given by
\begin{align}
\A = \left\{ a(f)e^{i \phi}, f \in [0,1], \phi \in [0,2\pi] 
\right\}.\label{atomicset}
\end{align}
For a general atomic set $\A$, the atomic norm of a vector is defined as the 
gauge function associated with the convex hull $\conv(\A)$ of atoms:
\begin{align}
 \vnorm{z}_\A = \inf \left\{ t > 0: z \in t \conv(\A)\right\}
\label{def-atnorm} = \inf \left\{\sum_a {c_a} : z = \sum_a {c_a a}, a \in \A, 
c_a > 0 \right\}
\end{align}
The authors in \cite{crpw} justify the use of atomic norm $\vnorm{\cdot}_\A$ as
a general penalty function to promote sparsity in an infinite dictionary $\A$.
This generalizes various forms of sparsity. For example, the $\ell^1$
norm~\cite{CRT06} for sparse vectors is an atomic norm corresponding to the
atomic set formed by canonical unit vectors. The nuclear
norm~\cite{CandesRecht09} for low rank matrices is an atomic norm induced by
the atomic set of unit-norm rank-1 matrices.

In this paper, we analyze the performance of the atomic norm soft thresholding 
(AST)  estimate:
\begin{equation}
\label{atnorm-denoise}
\hat{x} = \arg\min_z \frac{1}{2} \vnorm{y-z}_2^2 + \tau \vnorm{z}_\A
\end{equation}
where the atomic norm $\|\cdot\|_\A$ corresponds to the atomic set in 
\eqref{atomicset}, and $\tau$ is a suitably chosen regularization parameter. 
The corresponding dual problem is interesting because it gives a way of 
localizing the frequencies in an atomic norm achieving decomposition of 
$\hat{x}$. The dual problem of AST is given by the following semi-infinite 
program:
\begin{align}
\label{atnorm-dual}
\maximize_q~ &\frac{1}{2}\vnorm{y}_2^2 - \frac{1}{2}\vnorm{y - \tau 
q}_2^2\nonumber\\
\text{subject to } & \sup_{f \in \mathbb{T}}~ \left|\langle q, a(f) 
\rangle\right| \leq 1
\end{align}
It is convenient to associate a trigonometric polynomial $\hat{Q}(f) = \langle 
\hat{q}, a(f) \rangle$ with the optimal solution $\hat{q}$ of the dual problem. 
As discussed in~\cite{btr12}, the frequencies in the support of the solution 
$\hat{x}$ can be identified by finding points on the torus $\mathbb{T}$ where 
$\hat{Q}$ has a magnitude of unity. We use 
\begin{equation}
\hat{x} = \sum_{l} \hat{c}_l a(\hat{f}_l)
\end{equation}
to denote the decomposition of $\hat{x}$ given by the dual polynomial 
$\hat{Q}(f)$.

We show in~\cite{btr12} that a good choice of $\tau$ for obtaining accelerated 
convergence rates is 
\begin{equation}
\label{tau}
\tau = \eta \sigma\sqrt{n \log(n)}
\end{equation}
for some $\eta \in (1, \infty)$. We shall use this choice of regularization
parameter throughout this paper.
\begin{remark}
As shown in Section III.A of our prior work\cite{btr12},
problem~\eqref{atnorm-denoise} is equivalent to the semidefinite programming
problem
\begin{align}\label{eq:primal-sdp}
\minimize_{z,u,t}~ &  \frac{1}{2}\vnorm{y - z}_2^2 + \frac{\tau}{2}(t + u_1)\\
\operatorname{subject\ to\ } & \begin{pmatrix}
\operatorname{Toep}(u) & z\\
z^* & t
\end{pmatrix} \succeq 0.
\end{align}
where $\operatorname{Toep}(u)$ denotes a Hermitian Toeplitz matrix with $u$ as
its first row and $u_1$ is the first component of $u$. Similarly, the dual
semi-infinite program \eqref{atnorm-dual} is equivalent to the dual
semidefinite program of~\eqref{eq:primal-sdp}.
\end{remark}

\section{What is the best rate we can expect?}\label{sec:minimax}
Using results about minimax achievable rates for linear models~
\cite{cd_minimax,rw_minimax}, we can deduce that the convergence rate stated in
\eqref{fast-rate} is near optimal. Define the set of $k$ well separated
frequencies as
\[
\mathcal{S}_k = \left\{(f_1, \dots, f_k) \in \mathbb{T}^k ~\middle|~  d(f_p, 
f_q) \geq 4/n, p \neq q \right\}
\]
The expected minimax denoising error $M_k$ for a line spectral signal with
frequencies from $\mathcal{S}_k$ is defined as the lowest expected denoising
error rate for any estimate $\hat{x}(y)$ for the worst case signal $x^\star$
with support $T(x^\star) \in \mathcal{S}_k$. Note that we can lower bound $M_k$
by restricting the set of candidate frequencies to smaller set. To that end,
suppose we restrict the signal $x^\star$ to have frequencies only drawn from an
equispaced grid on the torus $T_n := \{ 4 j/n \}_{j=1}^{n/4}$. Note that any
set of $k$ frequencies from $T_n$ are pairwise separated by at least $4/n$. If
we denote by $F_n$ a $n \times (n/4)$ partial DFT matrix with (unnormalized)
columns corresponding to frequencies from $T_n$, we can write $x^\star = F_n
c^\star$ for some $c^\star$ with $\vnorm{c^\star}_0 = k$. Thus,
\begin{align*}
M_k &:= \inf_{\hat{x}}
 \sup_{
	T(x^\star) \in \mathcal{S}_k}
\frac{1}{n} \mathbb{E} \vnorm{\hat{x} - x^\star}_2^2
	\\
&\geq \inf_{\hat{x}} 
 \sup_{
	\vnorm{c^\star}_0 \leq k
	} \frac{1}{n} \mathbb{E} \vnorm{\hat{x} - F_n c^\star}_2^2\\
&\geq \inf_{\hat{c}}
 \sup_{\vnorm{c^\star}_0 \leq k} \frac{1}{n} \mathbb{E} \vnorm{F_n(\hat{c} - 
 c^\star)}_2^2\\
&\geq  \frac{n}{4} \left\{ \inf_{\hat{c}}
 \sup_{\vnorm{c^\star}_0 \leq k}\frac{4}{n} \mathbb{E} \vnorm{\hat{c} - 
 c^\star}_2^2\right\}\,.
\end{align*}
Here, the first inequality is the restriction of $T(x^\star)$. The second
inequality follows because we project out all components of $\hat{x}$ that do
not lie in the span of $F_n$. Such projections can only reduce the Euclidean
norm. The third inequality uses the fact that the minimum singular value of
$F_n$ is $n$ since $F_n^*F_n = n I_{{n}/{4}}$. Now we may directly apply the
lower bound for estimation error for linear models derived by Cand\'es and
Davenport. Namely, Theorem 1 of~\cite{cd_minimax} states that
\begin{align*}
\inf_{\hat{c}}
 \sup_{\vnorm{c^\star}_0 \leq k} \frac{4}{n} \mathbb{E} \vnorm{\hat{c} - 
 c^\star}_2^2&\geq {C} \sigma^2 \frac{k 
 \log\left(\frac{n}{4k}\right)}{\vnorm{F_n}_\mathrm{F}^2}\,.
 \end{align*} With the preceding analysis and the fact that 
 $\vnorm{F_n}_{\mathrm{F}}^2 = n^2/4$, we can thus deduce the following theorem:
 \begin{theorem}
\label{minimax}
Let $x^\star$ be a line spectral signal as described by \eqref{signal} with the
support $T(x^\star) = \{f_1, \dots, f_k\} \in \mathcal{S}_k$ and $y = x^\star +
w$, where $w \in \C^n$ is circularly symmetric Gaussian noise with variance
$\sigma^2 I_n$. Let $\hat{x}$ be any estimate of $x^\star$ using $y$. Then,
\[
M_k = \inf_{\hat{x}}
 \sup_{
	T(x^\star) \in \mathcal{S}_k}
\frac{1}{n} \mathbb{E} \vnorm{\hat{x} - x^\star}_2^2
\geq C\sigma^2 \frac{k \log\left(\frac{n}{4k}\right)}{n}
\]
for some constant $C$ that is independent of $k$, $n$, and $\sigma$.
\end{theorem}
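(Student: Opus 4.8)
The plan is to prove Theorem~\ref{minimax} by reducing the continuum line spectral denoising problem to a finite-dimensional sparse Gaussian linear regression problem, for which a sharp minimax lower bound is already available in~\cite{cd_minimax}. The chain of inequalities displayed immediately before the statement performs exactly this reduction, so the proof amounts to justifying each of those steps carefully and then substituting the relevant constants into Theorem~1 of~\cite{cd_minimax}.

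First I would restrict the supremum over supports $T(x^\star)\in\mathcal{S}_k$ to supports drawn from the equispaced grid $T_n=\{4j/n\}_{j=1}^{n/4}$; since any $k$ points of $T_n$ are pairwise separated by at least $4/n$, this is a legitimate restriction and can only decrease the worst-case risk. On this sub-family every candidate signal can be written $x^\star=F_nc^\star$ with $\vnorm{c^\star}_0\le k$, where $F_n$ is the $n\times(n/4)$ partial DFT matrix whose (unnormalized) columns are the atoms $a(4j/n)$. Next I would pass from estimating $x^\star$ to estimating the coefficient vector $c^\star$: for any estimator $\hat x$, replacing it by its orthogonal projection onto $\range(F_n)$ does not increase the error (projection is nonexpansive and $x^\star$ already lies in $\range(F_n)$), and on $\range(F_n)$ the map $c\mapsto F_nc$ is an isometry up to the scalar $\sqrt n$ because $F_n^*F_n=nI_{n/4}$. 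Thus the reduced problem is precisely ``estimate a $k$-sparse $c^\star$ from $y=F_nc^\star+w$ with $w$ complex Gaussian of variance $\sigma^2$,'' and keeping track of the factors of $n$ gives $M_k\ge \tfrac n4\,\inf_{\hat c}\sup_{\vnorm{c^\star}_0\le k}\tfrac4n\,\E\vnorm{\hat c-c^\star}_2^2$, matching the last line of the displayed derivation.

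Finally I would invoke Theorem~1 of~\cite{cd_minimax}, which lower-bounds the minimax risk for estimating a $k$-sparse vector in the linear model $y=Xc+w$ by $C\sigma^2 k\log(p/k)/\vnorm{X}_{\mathrm F}^2$, where $p$ is the number of columns; applied with $X=F_n$, $p=n/4$, and $\vnorm{F_n}_{\mathrm F}^2=n^2/4$, and substituted into the previous bound, this yields $M_k\ge C\sigma^2 k\log\!\big(\tfrac{n}{4k}\big)/n$. The main obstacle, such as it is, is ensuring the hypotheses of the Cand\`es--Davenport bound genuinely apply after the reduction: that the noise in the reduced model is the standard isotropic Gaussian required there (up to the harmless real/complex constant distinction, seen by normalizing $\tfrac1{\sqrt n}F_n^*w$), that the support family over the grid has the full combinatorial richness $\binom{n/4}{k}$ which is what produces the $\log(n/(4k))$ factor, and that the implicit requirement $k\le n/4$ holds so that $T_n$ actually contains $k$ frequencies. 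Apart from verifying these points, the remainder is bookkeeping in the parameters $n$, $k$, and $\sigma$.
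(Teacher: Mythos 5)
Your proposal reproduces the paper's argument step for step: restricting the supremum to the grid $T_n$, passing to coefficients via the partial DFT matrix $F_n$ with $F_n^*F_n = nI_{n/4}$, projecting the estimator onto $\range(F_n)$, and then invoking Theorem~1 of~\cite{cd_minimax} with $\vnorm{F_n}_{\mathrm F}^2 = n^2/4$. The extra care you flag about the real/complex noise normalization and the implicit requirement $k\le n/4$ is sensible bookkeeping, but the route is the same one the paper takes.
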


This theorem and Theorem~\ref{main} certify that AST is nearly minimax optimal
for spectral estimation of well separated frequencies.

\section{Proofs of Main Theorems}
\label{sec:proofs}

In this section, there are many numerical constants. Unless otherwise
specified, $C$ will denote a numerical constant whose value may change from
equation to equation. Specific constants will be highlighted by accents or
subscripts.

We describe the preliminaries and notations, and restate some recent results we
used before sketching the proof of Theorems \ref{main} and \ref{support}.

\subsection{Preliminaries}
The sample $x^\star_j$ may be regarded as the $j$th trigonometric moment of 
the discrete measure $\mu$ given by \eqref{mu}:
\begin{eqnarray*}
  x_j^\star & = & \int_0^1 e^{i 2 \pi j f} \mu ( d f)
\end{eqnarray*}
for $-m \leq j \leq m$.
Thus, the problem of extracting the frequencies and amplitudes from noisy
observations may be regarded as the inverse problem of estimating a measure
from noisy trigonometric moments.

We can write the vector $x^\star$ of observations $[x_{-m}^\star, \ldots, 
x_m^\star]^T$ in terms of an \emph{atomic decomposition}
\[
x^\star = \sum_{l=1}^k c_l a(f_l)
\]
or equivalently in terms of a corresponding \emph{representing measure} $\mu$ 
given by \eqref{mu} satisfying
\[
x^\star = \int_0^1 a(f) \mu(df)
\]
There is a one-one correspondence between atomic decompositions and
representing measures. Note that there are infinite atomic decompositions of
$x^\star$ and also infinite corresponding representing measures. However, since
every collection of $n$ atoms is linearly independent, $\A$ forms a full spark
frame~\cite{spark} and therefore the problem of finding the sparsest
decomposition of $x^\star$ is well-posed if there is a decomposition which is
at least $n/2$ sparse.

The atomic norm of a vector $z$ defined in \eqref{def-atnorm} is the minimum
total variation norm~\cite{cs_otg,tvnorm} $\vnorm{\mu}_{\mathrm{TV}}$ of all
representing measures $\mu$ of $z$. So, minimizing the total variation norm is
the same as finding a decomposition that achieves the atomic norm.

\subsection{Dual Certificate and Exact Recovery}

Atomic norm minimization attempts to recover the sparsest decomposition by
finding a decomposition that achieves the atomic norm, i.e., find ${c_l,f_l}$
such that $x^\star = \sum_l c_l a(f_l)$ and $ \vnorm{x^\star}_\A = \sum_l |c_l|
$ or equivalently, finding a representing measure $\mu$ of the form \eqref{mu}
that minimizes the total variation norm $ \vnorm{\mu}_{\mathrm{TV}}$. The
authors of~\cite{cg_exact12} showed that when $n > 256$, the decomposition that
achieves the atomic norm is the sparsest decomposition by explicitly
constructing a dual certificate~\cite{dualcert} of optimality, whenever the
composing frequencies $f_1, \ldots, f_k$ satisfy a minimum separation
condition~\eqref{min-sep}. In the rest of the paper, we always make the
technical assumption that $n > 256$.

\begin{definition}[Dual Certificate]
\label{dual-cert}
A vector $q \in \C^n$ is called a dual certificate for $x^\star$ if for the
corresponding trigonometric polynomial $Q(f) := \langle q, a(f) \rangle$, we
have $$Q(f_l) = \operatorname{sign}(c_l), l = 1, \ldots, k$$ and $$|Q(f)| < 1$$
whenever $f\not\in \{ f_1, \ldots, f_k\}$.
\end{definition}
The authors of ~\cite{cg_exact12} not only explicitly constructed such a
certificate characterized by the dual polynomial $Q$, but also showed that
their construction satisfies some stability conditions, which is crucial for
showing that denoising using the atomic norm provides stable recovery in the
presence of noise.

\begin{theorem}[Dual Polynomial Stability, Lemma 2.4 and 2.5 in
\cite{cg_noisy}] \label{dual-stab} For any $f_1, \ldots, f_k$ satisfying the
separation condition \eqref{min-sep} and any sign vector $v \in \C^k$ with
$|v_j|=1$, there exists a trigonometric polynomial $Q = \left<q, a(f)\right>$
for some $q \in \C^n$ with the following properties:
\begin{enumerate}
\item For each $j = 1, \ldots, k$, $Q$ interpolates the sign vector $v$ so that 
$Q(f_j) = v_j$
\item In each neighborhood $N_j$ corresponding to $f_j$ defined by
$N_j = \left\{ f : d(f, f_j) < {0.16}/{n} \right\}$, 
the polynomial $Q(f)$ behaves like a quadratic and there exist constants $C_a, 
C_a'$ so that
\begin{align}
\label{q1}|Q(f)| & \leq 1 - \frac{C_a}{2} n^2 (f-f_j)^2\\
\label{q2}|Q(f) - v_j| & \leq \frac{C_a'}{2} n^2 (f - f_j)^2
\end{align}
\item When $f \in F = [0,1] \backslash \cup_{j=1}^k{N_j}$, there is a numerical 
constant $C_b>0$ such that
\[
|Q(f)| \leq 1 - C_b
\]
\end{enumerate}
\end{theorem}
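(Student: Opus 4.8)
The plan is to exhibit $Q$ explicitly, in the now-standard way of Cand\`es and Fernandez-Granda, as a linear combination of a fast-decaying bump kernel and its derivative centered at the $f_j$'s, and then verify the three families of bounds by perturbation off the separation condition. Concretely, let $K$ be a suitable power of the Fej\'er kernel: a nonnegative trigonometric polynomial of degree at most $m$ with $K(0)=1$, $K'(0)=0$, $K''(0)<0$, and with $K$, $K'$, $K''$ all rapidly decaying away from the origin on $\mathbb{T}$. Set
\[
Q(f) = \sum_{j=1}^k \alpha_j\, K(f-f_j) + \sum_{j=1}^k \beta_j\, K'(f-f_j),
\]
which is automatically of the form $\langle q, a(f)\rangle$ for some $q\in\C^n$ because it has degree at most $m$. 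The $2k$ complex coefficients $(\alpha_j,\beta_j)$ are fixed by the $2k$ interpolation equations $Q(f_j)=v_j$ and $Q'(f_j)=0$; the second family forces $|Q|$ to have a critical point at each $f_j$, which is what ultimately produces the quadratic behaviour asserted in \eqref{q1} and \eqref{q2}, and the first family gives item~1 directly.

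Writing the interpolation equations in block form, the coefficient matrix is, after rescaling the derivative rows and columns by the appropriate powers of $n$, a perturbation of a block-diagonal matrix with diagonal blocks $I$ and $|K''(0)|\,I$, whose off-diagonal entries are the quantities $K(f_p-f_q)$, $K'(f_p-f_q)/n$, $K''(f_p-f_q)/n^2$ for $p\neq q$. Because $d(f_p,f_q)>4/n$ and $K$ (with its derivatives, suitably normalized) decays fast, the tail sums $\sum_{p\neq q}|K(f_p-f_q)|$ and their derivative analogues are bounded by a numerical constant strictly less than $1$; a Neumann-series inversion then shows the system is solvable with $\alpha_j = v_j + \varepsilon_j$ and $\beta_j = \varepsilon_j'$, where $|\varepsilon_j|$ and $n\,|\varepsilon_j'|$ are small absolute constants. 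This is the step I expect to be the main obstacle: one must pick the power of the Fej\'er kernel large enough — and invoke the hypothesis $n>256$ so the finite-degree kernel obeys the clean asymptotic estimates used for its tails — to guarantee the perturbation is genuinely below $1$, and then carry explicit numerical bounds on $\varepsilon_j,\varepsilon_j'$ through the remaining estimates so that the ``$1-$'' in \eqref{q1} and in item~3 survives with a strictly positive gap.

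For the near region, fix $f\in N_j$, so $d(f,f_j)<0.16/n$, and split $Q(f)$ into the $j$-th term plus a remainder from the indices $p\neq j$. Taylor expanding around $0$ gives $K(t)=1-c_2 n^2 t^2 + O(n^4 t^4)$ with $c_2>0$ coming from $K''(0)<0$, and $K'(t)=O(n^2 t)$ on this range, while the remainder terms are smooth and small by the decay of $K$ together with the smallness of the coefficients from the previous step. Collecting terms and using $|v_j|=1$ yields both $|Q(f)|\le 1-\tfrac{C_a}{2}n^2(f-f_j)^2$ and $|Q(f)-v_j|\le \tfrac{C_a'}{2}n^2(f-f_j)^2$, with the constant $0.16$ chosen precisely so that the negative quadratic term dominates the quartic error and the cross-term contributions uniformly over $N_j$.

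For the far region, fix $f\in F$ and bound $|Q(f)|\le \sum_j\big(|\alpha_j|\,|K(f-f_j)| + |\beta_j|\,|K'(f-f_j)|\big)$. The frequency closest to $f$ lies at distance at least $0.16/n$, so its contribution is at most $\sup_{0.16/n\le d(t,0)\le 1/2}|K(t)|$ plus a small derivative term, hence at most $1-\delta$ for a fixed $\delta>0$ by the strict unimodality and decay of $K$; the remaining frequencies are $4/n$-separated from one another, so their total contribution is bounded by a small fixed constant. Summing gives $|Q(f)|\le 1-C_b$ for a numerical constant $C_b>0$, which is item~3 and completes the construction, all of $C_a,C_a',C_b$ being absolute as required.
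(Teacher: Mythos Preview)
The paper does not prove this theorem; it is quoted verbatim from Cand\`es and Fernandez-Granda (Lemmas 2.4 and 2.5 of \cite{cg_noisy}) and used as a black box. The only place the construction surfaces is in Lemma~\ref{l1}, where the authors recall that $Q$ has the form $\sum_j \alpha_j K(f-f_j)+\sum_j \beta_j K'(f-f_j)$ with $K$ the \emph{squared} Fej\'er kernel and $\|\alpha\|_\infty\le C_\alpha$, $\|\beta\|_\infty\le C_\beta/n$, in order to bound $\|Q\|_1$.

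Your sketch is exactly the Cand\`es--Fernandez-Granda argument and matches what the paper invokes: the kernel is specifically the squared Fej\'er kernel (not an unspecified power), the $2k\times 2k$ interpolation system for $(Q(f_j),Q'(f_j))=(v_j,0)$ is inverted by a Neumann-series/diagonal-dominance argument under the separation condition, yielding the coefficient bounds above, and the near/far estimates are obtained by Taylor expansion and tail summation respectively. So your proposal is correct and, to the extent the paper has a ``proof'' at all, coincides with it.
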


We use results in~\cite{cg_noisy} and~\cite{btr12} (reproduced in Appendix D
for convenience) and borrow several ideas from the proofs in~\cite{cg_noisy},
with nontrivial modifications to establish the error rate of atomic norm
regularization.

\subsection{Proof of Theorem~\ref{main}}
Let $\hat{\mu}$ be the representing measure for the solution $\hat{x}$  of 
\eqref{atnorm-denoise} with minimum total variation norm, that is,
\[
\hat{x} = \int_0^1 a(f) \hat{\mu}(df)
\]
and $\vnorm{\hat{x}}_\A = \vnorm{\hat{\mu}}_{\mathrm{TV}}$. Denote the error
vector by $e = x^\star - \hat{x}$. Then, the difference measure $\nu = \mu -
\hat{\mu}$ is a representing measure for $e$. We first express the denoising
error $\vnorm{e}_2^2$ as the integral of the error function $E(f) = \langle e,
a(f) \rangle,$ against the difference measure $\nu$:
\begin{align*}
\vnorm{e}_2^2 &= \langle e, e \rangle\\
& = \left\langle e, \int_0^1 a(f) \nu(df) \right\rangle\\
& =  \int_0^1  \left\langle e,a(f) \right\rangle \nu(df)\\
& = \int_0^1 E(f) \nu(df).
\end{align*}

Using a Taylor series approximation in each of the near regions $N_j$, we first
show that the denoising error (or in general any integral of a trigonometric
polynomial against the difference measure) can be controlled in terms of an
integral in the far region $F$ and the zeroth, first, and second moments of the
difference measure in the near regions. The precise result is presented in the
following lemma, whose proof is given in Appendix
\ref{apx:pf:taylor}.
\begin{lemma}
\label{part1}
Define
\begin{align*} 
I_0^j &:= \left| \int_{N_j} \nu(df) \right|\\
I_1^j &:= n \left| \int_{N_j} (f-f_j) \nu(df) \right|\\
I_2^j &:= \frac{n^2}{2} \int_{N_j} (f-f_j)^2 |\nu|(df)\\
I_l &:= \sum_{j=1}^k I_l^j,~~\mbox{for}~l = 0, 1, 2\,.
\end{align*}
Then for any $m$th order trigonometric polynomial $X$, we have
\[
\int_0^1{ X(f) \nu(df)}
\leq \vnorm{X(f)}_\infty \left(\int_F{|\nu|(df)} + I_0 + I_1 + I_2\right)
\]
\end{lemma}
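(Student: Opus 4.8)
The plan is to split $\int_0^1 X(f)\,\nu(df)$ into the contribution from the far region $F$ and the contributions from the near regions $N_j$, and to control the latter by a second-order Taylor expansion of $X$ about each $f_j$. It suffices to bound the modulus $\left|\int_0^1 X(f)\,\nu(df)\right|$, which yields the stated inequality in the situation where it is applied (the left-hand side then being real and nonnegative). For the far region, pointwise $|X(f)|\le\|X\|_\infty$ gives immediately $\left|\int_F X(f)\,\nu(df)\right|\le\|X\|_\infty\int_F|\nu|(df)$, which accounts for the first term on the right.

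On each $N_j$ I would write $X(f)=X(f_j)+X'(f_j)(f-f_j)+\rho_j(f)$, working on the torus so that $f-f_j$ denotes the representative in $(-1/2,1/2]$, with the remainder in integral form $\rho_j(f)=\int_{f_j}^{f}(f-s)\,X''(s)\,ds$; this form is valid for complex-valued $X$ and gives $|\rho_j(f)|\le\tfrac12(f-f_j)^2\sup_{N_j}|X''|$. Integrating against $\nu$ over $N_j$ and using the triangle inequality produces three terms: $|X(f_j)|\left|\int_{N_j}\nu\right|\le\|X\|_\infty I_0^j$; $|X'(f_j)|\left|\int_{N_j}(f-f_j)\,\nu\right|$; and $\int_{N_j}|\rho_j|\,|\nu|\le\tfrac12\|X''\|_\infty\int_{N_j}(f-f_j)^2\,|\nu|$. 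The key analytic input is Bernstein's inequality for trigonometric polynomials of order $m$: $\|X'\|_\infty\le 2\pi m\|X\|_\infty$, and, applying it twice, $\|X''\|_\infty\le(2\pi m)^2\|X\|_\infty$; since $n=2m+1$, these are at most $\pi n\|X\|_\infty$ and $\pi^2 n^2\|X\|_\infty$. Substituting these bounds into the last two terms and recalling the normalizations $I_1^j=n\left|\int_{N_j}(f-f_j)\,\nu\right|$ and $I_2^j=\tfrac{n^2}{2}\int_{N_j}(f-f_j)^2\,|\nu|$ shows that each near region contributes at most $\|X\|_\infty\left(I_0^j+\pi I_1^j+\pi^2 I_2^j\right)$; summing over $j$ and adding the far-region bound gives the lemma, the absolute constants $\pi,\pi^2$ coming from Bernstein being harmless (they can be absorbed, e.g. into the near-region radius $0.16/n$, or simply carried as fixed constants).

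The fussiest step is the constant-and-degree bookkeeping: Bernstein's inequality must be invoked in exactly the normalization that matches the powers of $n$ built into $I_1$ and $I_2$, the complex-valued Taylor remainder must be handled through its integral representation rather than a mean-value form, and every expansion must be carried out on the torus rather than on $[0,1]$. Conceptually the lemma is just ``Taylor expansion plus Bernstein,'' so no genuine obstacle is expected beyond keeping these bookkeeping details straight.
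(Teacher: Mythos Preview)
Your proposal is correct and follows essentially the same approach as the paper: split the integral into the far region $F$ and the near regions $N_j$, bound the far part by $\|X\|_\infty\int_F|\nu|(df)$, Taylor-expand $X$ about each $f_j$ on $N_j$, and invoke Bernstein's inequality to convert the bounds $|X'|\le Cn\|X\|_\infty$ and $|X''|\le Cn^2\|X\|_\infty$ into the terms $I_0^j,I_1^j,I_2^j$. Your version is in fact slightly more careful than the paper's---you use the integral form of the remainder (valid for complex-valued $X$, whereas the paper writes a Lagrange-form remainder $X''(\xi_j)$) and you track the $\pi$ factors from Bernstein that the paper silently drops---but these are cosmetic differences, not a different route.
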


Applying Lemma \ref{part1} to the error function, we get
\begin{equation}
\label{ebd}
\vnorm{e}_2^2 \leq \vnorm{E(f)}_\infty 
\left( \int_F{|\nu|(df)} + I_0 + I_1 + I_2\right)
\end{equation}
As a consequence of our choice of $\tau$ in \eqref{tau}, we can show that
$\vnorm{E(f)}_\infty \leq (1+2\eta^{-1})\tau$ with high probability. In fact,
we have
\begin{align*}
\vnorm{E(f)}_\infty &= \sup_{f \in [0,1]}\left|\langle e, a(f) \rangle\right|\\
&= \sup_{f \in [0,1]} \left| \langle x^\star - \hat{x}, a(f) \rangle\right|\\
&\leq \sup_{f \in [0,1]} \left| \langle w, a(f) \rangle \right| +  \sup_{f \in [0,1]} \left| \langle y - \hat{x}, a(f) \rangle \right|\\
&\leq \sup_{f \in [0,1]} \left| \langle w, a(f) \rangle \right| +  \tau\\
\label{errbd} \numberthis &\leq (1 +2\eta^{-1})\tau \leq 3 \tau, \text{with 
high probability.}
\end{align*}
The second inequality follows from the optimality conditions for 
\eqref{atnorm-denoise}. It is shown in Appendix C of ~\cite{btr12} that the 
penultimate inequality holds with high probability.

Therefore, to complete the proof, it suffices to show that the other terms on
the right hand side of \eqref{ebd} are $O(\frac{k\tau}{n})$. While there is no
exact frequency recovery in the presence of noise, we can hope to get the
frequencies approximately right. Hence, we expect that the integral in the far
region can be well controlled and the local integrals of the difference measure
in the near regions are also small due to cancellations. Next, we utilize the
properties of the dual polynomial in Theorems~\ref{dual-stab} and another
polynomial given in Theorem~\ref{dual-lin} in Appendix \ref{apx:collection} to
show that the zeroth and first moments of $\nu$ may be controlled in terms of
the other two quantities in \eqref{ebd} to upper bound the error rate. The
following lemma is similar to Lemmas 2.2 and 2.3 in~\cite{cg_noisy}, but we
have made several modifications to adapt it to our signal and noise model. For
completeness, we provide the proof in Appendix \ref{apx:pf:I0I1}.

\begin{lemma}
\label{part2}
There exists numeric constants $C_0$ and $C_1$ such that
\begin{align*}
I_0 &\leq C_0 \left(\frac{k \tau}{n} + I_2 + \int_F{|\nu|(df)}\right) \\
I_1 &\leq C_1 \left(\frac{k \tau}{n} + I_2 + \int_F{|\nu|(df)}\right).
\end{align*}
\end{lemma}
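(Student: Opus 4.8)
The plan is to bound $I_0$ and $I_1$ by exploiting the dual polynomial $Q$ from Theorem~\ref{dual-stab} together with a companion polynomial $Q_1$ (the one promised in Theorem~\ref{dual-lin}) whose behavior near each $f_j$ is linear rather than quadratic, and to compare the integral of these polynomials against $\nu$ with the ``data-fidelity'' control $\|E\|_\infty \lesssim \tau$ already in hand. The key algebraic engine is the following: for a suitable polynomial $P = \langle p, a(f)\rangle$ built from the dual certificate, $\int_0^1 P(f)\,\nu(df) = \langle p, e\rangle$, and this inner product is $O(\|p\|_2 \cdot \text{something})$, or more usefully is controlled by combining $|\langle p,w\rangle|$ with the optimality gap $\langle p, y-\hat x\rangle$. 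So the crude skeleton is: (a) write $\int P\,\nu$ as an inner product with $e$; (b) split $e = (x^\star - y) + (y - \hat x) = -w + (y-\hat x)$; (c) bound $|\langle p, w\rangle|$ via a Gaussian concentration / union bound argument (as in Appendix C of \cite{btr12}); (d) bound $|\langle p, y - \hat x\rangle|$ using the optimality conditions of \eqref{atnorm-denoise}, which say $y - \hat x$ is (up to scaling $\tau$) a dual-feasible point, hence $|\langle a(f), y-\hat x\rangle| \le \tau$ and more generally $|\langle p, y-\hat x\rangle| \le \tau \|\mu_p\|_{\mathrm{TV}}$ where $\mu_p$ is a representing measure of $p$.

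Next I would decompose $\int_0^1 P(f)\,\nu(df)$ over the regions. On $F$ we just use $|P| \le \|P\|_\infty \le 1$, giving a contribution bounded by $\int_F |\nu|(df)$. On each near region $N_j$, Taylor-expand $P$ around $f_j$: $P(f) = P(f_j) + P'(f_j)(f - f_j) + O(n^2 (f-f_j)^2 \|P\|_{\text{something}})$. Choosing $P = Q$ with sign pattern $v_j = \overline{\operatorname{sign}}(\text{phase of } \nu \text{ on } N_j)$ makes $\sum_j P(f_j)\int_{N_j}\nu(df)$ reproduce $\sum_j |\int_{N_j}\nu| = I_0$ up to the first-order and second-order remainders; the first-order remainder is $\lesssim I_1$ (absorbable or handled by the second polynomial), and the quadratic remainder is $\lesssim I_2$ by \eqref{q1}. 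Symmetrically, choosing $P = Q_1$ — the polynomial that vanishes to first order suitably, i.e. behaves like $c\, n(f-f_j)$ near $f_j$ with the right phases — lets $\sum_j P'(f_j)\int_{N_j}(f-f_j)\nu(df)$ reproduce $I_1$ up to $O(I_2)$ and $O(I_0)$ corrections. Combining the two estimates gives a linear system
\[
I_0 \le C\!\left(\tfrac{k\tau}{n} + I_2 + \int_F |\nu| + \epsilon I_1\right), \qquad I_1 \le C\!\left(\tfrac{k\tau}{n} + I_2 + \int_F |\nu| + \epsilon I_0\right),
\]
and for small enough $\epsilon$ (or after a clean separation so no cross terms appear) one solves this $2\times 2$ system to get the stated bounds.

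The quantity $\frac{k\tau}{n}$ enters precisely from the $\langle p, w\rangle$ and $\langle p, y-\hat x\rangle$ terms: the polynomial $Q$ (or $Q_1$) interpolating $k$ signs has a representing measure of total variation $O(k)$ (it is essentially a sum of $k$ localized bumps, or more carefully its TV norm / the relevant $\ell_1$-type norm of $p$ scales like $k$ by the construction in \cite{cg_exact12,cg_noisy}), so $|\langle p, y - \hat x\rangle| \le \tau \cdot O(k)$; and after normalizing by dividing through by $n$ somewhere in the moment scaling (the $I_l^j$ carry factors $n, n^2/2$), this becomes $O(k\tau/n)$. Likewise the Gaussian term contributes at the same order with high probability given the choice $\tau = \eta\sigma\sqrt{n\log n}$.

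The main obstacle is step involving the first-order moments: unlike in \cite{cg_exact12}, where the certificate only needs $Q(f_j) = v_j$ and $Q'(f_j)=0$, here we must extract $I_1$ cleanly, which requires the \emph{second} polynomial $Q_1$ with prescribed nonzero derivative at each $f_j$ and controlled quadratic growth — ensuring its construction, its $\ell_1$-type norm $O(k)$, and its quadratic remainder bound all hold simultaneously under the separation condition \eqref{min-sep} is the delicate part, and decoupling the resulting $I_0$–$I_1$ inequalities so that the constants $C_0, C_1$ are genuinely finite (i.e. the off-diagonal coupling constant is $< 1$) is where the numerical constants from Theorem~\ref{dual-stab} (the $0.16/n$ neighborhood radius, $C_a, C_a', C_b$) must be checked to be compatible. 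The Gaussian tail bound for $\langle p, w\rangle$ is routine given \cite{btr12}, and the Taylor remainders are routine given Theorem~\ref{dual-stab}.
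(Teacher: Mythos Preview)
Your proposal has the correct overall architecture — use $Q$ from Theorem~\ref{dual-stab} to extract $I_0$ and $Q_1$ from Theorem~\ref{dual-lin} to extract $I_1$, and control $\int Q\,\nu = \langle q,e\rangle$ — but there are two concrete problems.

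\textbf{The $k\tau/n$ scaling is obtained by the wrong mechanism.} You claim that the representing measure of $q$ has total variation $O(k)$, so $|\langle q, y-\hat x\rangle| \le \tau\cdot O(k)$, and that the missing $1/n$ then appears from ``moment scaling'' in the definitions of $I_l^j$. But $I_0 = \sum_j \bigl|\int_{N_j}\nu\bigr|$ carries \emph{no} factor of $n$, so your route would yield $I_0 \le O(k\tau) + \cdots$, a factor $n$ too large. The paper's route (Lemma~\ref{l4} together with Lemma~\ref{l1}) is different: by Parseval, $\int_0^1 Q(f)\,\nu(df) = \langle Q, E\rangle_{L^2[0,1]}$, and then H\"older with $\|E\|_\infty \le 3\tau$ and the crucial estimate $\|Q\|_{L^1[0,1]} \le Ck/n$ (proved from the explicit squared-Fej\'er kernel form of $Q$) gives $\bigl|\int Q\,\nu\bigr| \le Ck\tau/n$ directly. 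The $1/n$ lives in the $L^1$ norm of the polynomial, not in any moment rescaling. There is also no need to split $e = -w + (y-\hat x)$: the bound $\|E\|_\infty \lesssim \tau$ already subsumes both pieces.

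\textbf{The anticipated $I_0$--$I_1$ coupling is a red herring.} You propose a $2\times 2$ system with off-diagonal terms $\epsilon I_1$, $\epsilon I_0$ and worry about whether the coupling constant is below $1$. In the paper's proof no such coupling appears, because Theorems~\ref{dual-stab} and~\ref{dual-lin} give \emph{directly quadratic} remainders: $|Q(f)-v_j| \le \tfrac{C_a'}{2}n^2(f-f_j)^2$ on $N_j$ (equation~\eqref{q2}) and $|Q_1(f)-v_j(f-f_j)| \le \tfrac{n}{2}C_a^1(f-f_j)^2$ on $N_j$ (equation~\eqref{ca1}). Consequently the near-region error terms contribute only $I_2$, never $I_1$ or $I_0$; the two bounds are decoupled from the outset and no Taylor expansion is invoked. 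Your concern about checking compatibility of the numerical constants for solvability of a linear system is therefore unnecessary.
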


All that remains to complete the proof is an upper bound on $I_2$ and $\int_F{|\nu|(df)}$.  The key idea in establishing such a bound is deriving upper and lower bounds on the
difference $\| P_{T^c} ( \nu) \|_{{\mathrm{TV}}} - \| P_T ( \nu) \|_{{\mathrm{TV}}}$
between the total variation norms of $\nu$ on and off the support. The upper bound can be derived using optimality conditions. We lower bound $\| P_{T^c} ( \nu)
\|_{{\mathrm{TV}}} - \| P_{T} ( \nu) \|_{{\mathrm{TV}}}$ using the fact that a constructed dual
certificate $Q$ has unit magnitude for every element in the support
$T$ of $P_T ( \nu)$ whence we have $\| P_T ( \nu) \|_{{\mathrm{TV}}} = \int_{\mathbb{T}}
Q ( f) \nu ( d f)$. A critical element in deriving both the lower and upper bounds is that the dual polynomial $Q$ has quadratic drop in each near regions $N_j$ and is bounded away from one in the far region $F$. Finally, by combing these bounds and carefully controlling the regularization parameter, we get the desired result summarized in the following lemma. The details of the proof are fairly technical and we leave them to Appendix \ref{apx:pf:I2far}.

\begin{lemma}
Let $\tau = \eta\sigma \sqrt{n\log(n)}$. If $\eta>1$ is large enough, then there exists a numerical constant $C$ such that, with high probability
\label{part3}
\[
\int_F{|\nu|(df)} + I_2 \leq \frac{C k \tau}{n}.
\]
\end{lemma}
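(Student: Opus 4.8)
The plan is the strategy sketched in the preceding paragraph: sandwich the signed quantity $D := \vnorm{P_{T^c}(\nu)}_{\mathrm{TV}} - \vnorm{P_T(\nu)}_{\mathrm{TV}}$ between an upper bound coming from optimality of $\hat x$ and a lower bound coming from a dual certificate, and then close the resulting self-referential inequality by taking the constant $\eta$ in \eqref{tau} large. For the upper bound I compare the objective of \eqref{atnorm-denoise} at $\hat x$ and at $x^\star$ and substitute $y = x^\star + w$; under \eqref{min-sep} the results of \cite{cg_exact12} make $\mu$ the minimum-total-variation representing measure of $x^\star$, so $\vnorm{x^\star}_\A = \vnorm{\mu}_{\mathrm{TV}}$, and one gets $\tfrac12\vnorm{e}_2^2 + \Re\vabs{e}{w} + \tau\vnorm{\hat x}_\A \le \tau\vnorm{\mu}_{\mathrm{TV}}$. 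Since $\vnorm{\hat x}_\A = \vnorm{\hat\mu}_{\mathrm{TV}}$ and $\hat\mu = (\mu - P_T\nu) - P_{T^c}\nu$ has disjoint support pieces, $\vnorm{\hat\mu}_{\mathrm{TV}} \ge \vnorm{\mu}_{\mathrm{TV}} + D$, whence $\tau D \le \abs{\vabs{e}{w}}$. As $e = \int_0^1 a(f)\,\nu(df)$, we have $\abs{\vabs{e}{w}} \le \big(\sup_f \abs{\vabs{w}{a(f)}}\big)\vnorm{\nu}_{\mathrm{TV}} \le (2\tau/\eta)\vnorm{\nu}_{\mathrm{TV}}$ on the high-probability event behind \eqref{errbd} (Appendix C of \cite{btr12}); combining with $\vnorm{\nu}_{\mathrm{TV}} = 2\vnorm{P_{T^c}\nu}_{\mathrm{TV}} - D$, then with $\vnorm{P_{T^c}\nu}_{\mathrm{TV}} \le \int_F\abs{\nu}(df) + I_0$ and Lemma \ref{part2}, produces $D \le \tfrac{C}{\eta}\big(\tfrac{k\tau}{n} + I_2 + \int_F\abs{\nu}(df)\big)$.

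For the lower bound I would invoke Theorem \ref{dual-stab} with the sign vector $v_j = \operatorname{sign}(\nu(\{f_j\}))$, producing a polynomial $Q(f) = \vabs{q}{a(f)}$ with $Q(f_j) = v_j$, the quadratic drop \eqref{q1} on each $N_j$, and $\abs{Q} \le 1 - C_b$ on $F$. This choice of signs makes $\vnorm{P_T\nu}_{\mathrm{TV}} = \int_{\mathbb T}\overline{Q(f)}\,(P_T\nu)(df) = \int_{\mathbb T}\overline Q\,d\nu - \int_{T^c}\overline Q\,d(P_{T^c}\nu)$; taking real parts and bounding $\big|\int_{T^c}\overline Q\,d(P_{T^c}\nu)\big|$ region by region --- by $1-C_b$ on $F$ and by $1 - \tfrac{C_a}{2}n^2(f-f_j)^2$ on each $N_j$ --- gives
\[
C_a I_2 + C_b\int_F\abs{\nu}(df) \;\le\; D + \Big|\int_{\mathbb T}\overline{Q(f)}\,\nu(df)\Big| .
\]

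Set $R := I_2 + \int_F\abs{\nu}(df)$. Combining the last two facts reduces the lemma to controlling the cross term $\big|\int_{\mathbb T}\overline Q\,d\nu\big| = \abs{\vabs{q}{e}}$. For this one uses the finer structure of $Q$: by \eqref{q2}, on each $N_j$ the integrand equals $\overline{v_j}$ up to a remainder of order $n^2(f-f_j)^2$ and on $F$ it is at most $1-C_b$ in modulus, so $\int_{\mathbb T}\overline Q\,d\nu$ splits into a term controlled by $I_0$ (hence by $R + k\tau/n$ via Lemma \ref{part2}), quadratic remainders of order $I_2$, and a far-region term of order $\int_F\abs{\nu}(df)$; feeding in the a priori bound $\vnorm{e}_2^2 \le \vnorm{E(f)}_\infty\big(\int_F\abs{\nu} + I_0 + I_1 + I_2\big) \le 3\tau\,C'\big(\tfrac{k\tau}{n} + R\big)$ from \eqref{ebd}, \eqref{errbd}, and Lemma \ref{part2}, and using $\vnorm{q}_2^2 = \int_0^1\abs{Q}^2\,df \le 1$, one arrives at a self-bounding inequality of the shape
\[
\min(C_a,C_b)\,R \;\le\; \frac{C}{\eta}\Big(\frac{k\tau}{n} + R\Big) + C''\sqrt{\,\tau\Big(\frac{k\tau}{n} + R\Big)\,} .
\]
Choosing $\eta$ large enough that $C/\eta < \tfrac12\min(C_a,C_b)$ and absorbing the square-root term by the arithmetic--geometric-mean inequality leaves $R \le Ck\tau/n$, which is the claim; the high-probability qualifier is the intersection of the event behind \eqref{errbd} with the Gaussian concentration used for $\vnorm{e}_2$ and $\vabs{q}{w}$.

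The step I expect to be the main obstacle is this control of the cross term $\big|\int_{\mathbb T}\overline Q\,d\nu\big|$. The naive estimate $\vnorm{q}_2\vnorm{e}_2$ is too lossy --- roughly a factor $\sqrt{n/k}$ too large, since $\vnorm{q}_2$ is only $O(1)$ while $\vnorm{e}_2$ is of order $\tau\sqrt{k/n}$ --- so one must instead use either the localization of the Cand\`es--Fernandez-Granda kernel underlying $q$, which lets one re-express $\int_{\mathbb T}\overline Q\,d\nu$ through the zeroth, first, and second moments $I_0,I_1,I_2$ (as in Lemma \ref{part1} and the auxiliary polynomials of Appendix \ref{apx:collection}), or the fact that $q$ lies in a fixed $O(k)$-dimensional subspace, so that $\vabs{q}{w}$ concentrates sharply under the Gaussian model. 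One must then check that every coefficient multiplying $R$ on the right-hand side --- the $C/\eta$ term, and whatever the cross term and the $I_0,I_1$ bounds of Lemma \ref{part2} contribute --- can be pushed simultaneously below $\min(C_a,C_b)$, all while keeping the various high-probability events compatible; that bookkeeping is where the technical weight of the lemma lies.
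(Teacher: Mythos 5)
Your overall architecture matches the paper's: sandwich $D := \vnorm{P_{T^c}\nu}_{\mathrm{TV}} - \vnorm{P_T\nu}_{\mathrm{TV}}$ between a dual-certificate lower bound of the form $C_a I_2 + C_b\int_F\abs{\nu}(df) - (\text{cross term})$ and an optimality-driven upper bound, then close the loop for $\eta$ large. But two steps in your chain fail, and both are fixed by the same ingredient you are missing, namely the $L^1$ control $\vnorm{Q}_1 \leq Ck/n$ of Lemma~\ref{l1}/\ref{l4}. First, the inequality $\vnorm{P_{T^c}\nu}_{\mathrm{TV}} \leq \int_F\abs{\nu}(df) + I_0$ in your upper-bound step is false: $\int_{N_j\setminus\{f_j\}}\abs{\nu}$ is not dominated by $I_0^j = \abs{\int_{N_j}\nu(df)}$ because of cancellation (take $\nu$ restricted to $N_j\setminus\{f_j\}$ to be $\epsilon^{-1}\delta_{f_j+\epsilon} - \epsilon^{-1}\delta_{f_j-\epsilon}$: its mass is $2/\epsilon$ while $I_0^j = 0$, $I_1^j = 2n$, $I_2^j = n^2\epsilon$). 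The paper never needs to bound $\vnorm{\nu}_{\mathrm{TV}}$; instead it applies Lemma~\ref{part1} to the polynomial $f\mapsto\langle w, a(f)\rangle$ to get $\abs{\langle w,e\rangle} \leq \vnorm{\langle w, a(\cdot)\rangle}_\infty\bigl(\int_F\abs{\nu}+I_0+I_1+I_2\bigr)$, and then Lemma~\ref{part2}, yielding $D \leq C\eta^{-1}\bigl(\tfrac{k\tau}{n}+I_2+\int_F\abs{\nu}\bigr)$ directly.

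Second, your treatment of the cross term $\abs{\int\overline{Q}\,d\nu} = \abs{\langle q,e\rangle}$ does not close. You correctly note that Cauchy--Schwarz $\vnorm{q}_2\vnorm{e}_2$ is a factor $\sqrt{n/k}$ too large, but the patch you sketch --- feeding $\vnorm{e}_2^2 \lesssim \tau\bigl(\tfrac{k\tau}{n}+R\bigr)$ into Cauchy--Schwarz --- gives the self-bounding inequality $\min(C_a,C_b)R \leq \tfrac{C}{\eta}\bigl(\tfrac{k\tau}{n}+R\bigr) + C''\sqrt{\tau\bigl(\tfrac{k\tau}{n}+R\bigr)}$, which does \emph{not} imply $R\lesssim k\tau/n$: even at $R=0$ the right-hand side is $\sim\tau\sqrt{k/n}\gg k\tau/n$, and absorbing the square root by AM--GM leaves an irreducible $O(\tau)$ term. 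Your alternative of re-expressing $\int Q\,d\nu$ through $I_0,I_1,I_2$ via Lemma~\ref{part1} with $\vnorm{Q}_\infty\leq 1$ is circular: it returns $R$ on the right with an $O(1)$, not small, constant. What actually works (Lemma~\ref{l4}) is Parseval plus H\"{o}lder in the $L^1/L^\infty$ pairing, $\abs{\int Q\,d\nu} = \abs{\langle Q,E\rangle} \leq \vnorm{Q}_1\vnorm{E}_\infty$, together with the Fej\'er-kernel estimate $\vnorm{Q}_1\leq Ck/n$ of Lemma~\ref{l1} and $\vnorm{E}_\infty\leq 3\tau$, giving $\abs{\int Q\,d\nu}\leq Ck\tau/n$ outright. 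With that the lower bound reads $D\geq C_a I_2 + C_b\int_F\abs{\nu} - Ck\tau/n$, and combined with the corrected upper bound one obtains $(C_a - C\eta^{-1})I_2 + (C_b - C\eta^{-1})\int_F\abs{\nu} \leq C(1+\eta^{-1})k\tau/n$, which is the claim once $\eta$ is large enough.
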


Putting together Lemmas \ref{part1}, \ref{part2} and \ref{part3}, we finally prove our main theorem:
\begin{align*}
\frac{1}{n}\vnorm{e}_2^2 
&\leq \frac{\vnorm{E(f)}_\infty}{n} \left(\int_F{|\nu|(df)} + I_0 + I_1 + I_2\right)\\
&\leq \frac{\vnorm{E(f)}_\infty}{n} \left(\frac{C_1 k \tau}{n} + C_2 \int_F{|\nu|(df)} + C_3 I_2\right)\\
&\leq  \frac{\vnorm{E(f)}_\infty}{n} \frac{C k \tau}{n} \\
& \leq \frac{C k \tau^2}{n^2}\\
&= O\left(\sigma^2\frac{k \log(n)}{n}\right).
\end{align*}
The first three inequalities come from successive applications of Lemmas 1, 2 and 3 respectively. The fourth inequality follows from \eqref{errbd} and the fifth by our choice of $\tau$ according to Eq. \eqref{tau}. This completes the proof of Theorem \ref{main}.

\subsection{Proof of Theorem~\ref{support}}
\label{sec:support}
The first two statements in Theorem \ref{support} are direct consequences of Lemma~\ref{part3}. For (iii.), we follow~\cite{granda2} and  use the dual polynomial $Q_j^{\star} ( f) = \langle q_j^{\star}, a ( f)\rangle$ constructed in Lemma 2.2 of~\cite{granda2} which satisfies
\begin{eqnarray*}
  Q_j^{\star} ( f_j) & = & 1\\
  | 1 - Q_j^{\star} ( f) | & \leq & n^2 C_1' ( f - f_j)^2, f \in N_j\\
  | Q_j^{\star} ( f) | & \leq & n^2 C_1' ( f - f_{j'})^2, f \in N_{j'}, j' \neq
  j\\
  | Q_j^{\star} ( f) | & \leq & C_2', f \in F.
\end{eqnarray*}
We note that $c_j - \sum_{\hat{f}_l \in N_j} \hat{c}_l = \int_{N_j} \nu(df)$. Then, by applying triangle inequality several times,
\begin{align*}
\left| \int_{N_j}  \nu(df)\right|
& \leq \left| \int_{N_j}  Q_j^\star (f) \nu(df)\right| + \left| \int_{N_j}  (1-Q_j^\star (f)) \nu(df)\right|\\
& \leq \left| \int_0^1  Q_j^\star (f) \nu(df)\right| + \left| \int_{N_j^c}  Q_j^\star (f) \nu(df)\right| + \left| \int_{N_j}  (1-Q_j^\star (f)) \nu(df)\right|\\
& \leq \left|\int_0^1  Q_j^\star (f) \nu(df)\right| + \left| \int_{F}  Q_j^\star (f) \nu(df)\right| \\
&\qquad\qquad\qquad + \sum_{\substack{j' \neq j\\j'=1}}^k \int_{N_{j'}} \left| Q_j^\star (f)\right| |\nu|(df) +  \int_{N_j}  \left|1-Q_j^\star (f)\right| |\nu(df)|\,.
\end{align*}

We upper bound the first term using Lemma~\ref{l4} in Appendix \ref{apx:collection} which yields
\[
\left| \int^0_{1}  Q_j^\star (f) \nu(df)\right| \leq \frac{Ck \tau}{n}
\]
The other terms can be controlled using the properties of $Q_j^\star$:
\begin{align*}
\left| \int_{F}  Q_j^\star (f) \nu(df)\right| & \leq C_2' \int_{F} |\nu| (df)\\
\sum_{\substack{j' \neq j\\j'=1}}^k \int_{N_{j'}} \left| Q_j^\star (f)\right| |\nu|(df) +  \int_{N_j}  \left|1-Q_j^\star (f)\right| |\nu|(df)
& \leq
 C_1'\sum_{j'=1}^k \int_{N_{j'}} n^2 (f-f_{j'})^2 |\nu|(df) = C_1 I_2
\end{align*}
Using Lemma~\ref{part3}, both of the above are upper bounded by $\frac{C k \tau}{n}$. Now, by combining these upper bounds, we finally have
\[
\left| c_j - \sum_{l : \hat{f}_l \in N_j} \hat{c}_l \right| \leq \frac{C_3 k \tau}{n}
\]
This shows part (iii) of the theorem. Part (iv) can be obtained by combining parts (ii) and (iii).

\section{Experiments}
\label{sec:experiments}

In~\cite{btr12}, we demonstrated with extensive experiments that AST
outperforms classical subspace algorithms in terms of mean squared estimation error.  In the experiments here, we focus on frequency localization and 
compare the  performance of AST, MUSIC~\cite{music} and
Cadzow's method~\cite{cadzow05} under various choices of number of frequencies,
number of samples and signal to noise ratios (SNRs).

We adopt the same experimental setup as in~\cite{btr12} and reproduce the
description of experiments here for convenience. We generated $k$ normalized
frequencies $f_1, \ldots, f_k$ uniformly randomly chosen from $\left[0,1\right]$
such that every pair of frequencies are separated by at least $1/2n$. The signal
$x^\star \in \C^n$ is generated according to \eqref{signal} with $k$ random
amplitudes independently chosen from $\chi^2(1)$ distribution (squared
Gaussian). All of our sinusoids were then assigned a random phase (equivalent to
multiplying $c_l$ by a random unit norm complex number). The observation
$y$ is produced by adding complex white gaussian noise $w$ such that the input
signal to noise ratio (SNR) is $-10,-5,0,5,10,15$ or $20$ dB. We compared the
average value of the following metrics of the various algorithms in 20 random
trials for various values of number of observations $(n = 64,128,256)$, and
number of frequencies $ (k = n/4,n/8,n/16)$.

AST needs an estimate of the noise variance $\sigma^2$ to pick the
regularization parameter according to \eqref{tau}. In our experiments, we do not provide our algorithm with the true noise variance.  Instead, we can construct an estimate for $\sigma$ with 
the following heuristic.   We formed the empirical autocorrelation
matrix using the MATLAB routine {\tt corrmtx} using a prediction order $n/3$
and averaging the lower $25\%$ of the eigenvalues. We then use this estimate in
equation~\eqref{tau} to determine the regularization parameter.   See~\cite{btr12} for more details.

We implemented AST using the Alternating Direction Method of Multipliers
(ADMM, see for example,~\cite{admm2011}, or~\cite{btr12} for the specific details).
We used the stopping criteria described in~\cite{admm2011} and set $\rho=2$ for
all experiments. We use the dual solution $\hat{z}$ to determine the support of
the optimal solution $\hat{x}$. Once the frequencies $\hat{f}_l$ are extracted,
we ran the least squares problem $\mbox{minimize}_\alpha \|U \alpha - y\|^2$
where $U_{jl} = \exp(i 2\pi j \hat{f}_l)$ to obtain \emph{debiased} estimates of the
amplitudes.

We implemented Cadzow's method as described by the pseudocode in~\cite{blu08},
and MUSIC~\cite{music} using the MATLAB routine {\tt rootmusic}. These
algorithms need an estimate of the number of sinusoids. Rather than implementing
a heuristic to estimate $k$, \emph{we fed the true $k$ to our solvers}. This
provides a significant advantage to these algorithms. On the contrary, AST is
not provided the true value of $k$, and the noise variance $\sigma^2$ required
in the regularization parameter is estimated from $y$.

Let $\{\hat{c}_l\}$ and $\{\hat{f}_l\}$ denote the amplitudes and frequencies
estimated by any of the algorithms - AST, MUSIC or Cadzow.  We use the
following error metrics to characterize the frequency localization of various
algorithms:
\begin{enumerate}
\item[(i)] Sum of the absolute value of amplitudes in the far region $F$, $m_1 = \sum_{l : \hat{f}_l \in F} |\hat{c}_l|$
\item[(ii)] The weighted frequency localization error, $m_2 = \sum_{l : \hat{f}_l \in N_j} |\hat{c}_l| \{ \min_{f_j \in T} d(f_j,\hat{f}_l) \}^2$
\item[(iii)] Error in approximation of amplitudes in the near region, $m_3 = \left| c_j - \sum_{l : \hat{f_l} \in N_j} \hat{c}_l \right|$
\end{enumerate}
These are precisely the quantities that we prove tend to zero in Theorem~\ref{support}.

To summarize the results, we first provide \emph{performance profiles} to summarize the behavior of the various
algorithms across all of the parameter settings. Performance profiles provide a
good visual indicator of the relative performance of many algorithms under a
variety of experimental conditions~\cite{dolanmore02}. Let $\mathcal{P}$ be the
set of experiments and let $e_s(p)$ be the value of the error measure $e$ of
experiment $p \in \mathcal{P}$ using the algorithm $s$. Then the ordinate
$P_s(\beta)$ of the graph at $\beta$ specifies the fraction of experiments
where the ratio of the performance of the algorithm $s$ to the minimum error
$e$ across all algorithms for the given experiment is less than $\beta$, i.e.,
\begin{equation*}
P_s(\beta) = \frac{\mathop{\#}\left\{p \in \mathcal{P} ~:~ e_s(p) \leq \beta \min_s e_s(p)\right\}}{\mathop{\#}(\mathcal{P})}
\end{equation*}

\begin{figure}[htp]
\begin{tabular}{ccc}
\includegraphics[height=42mm]{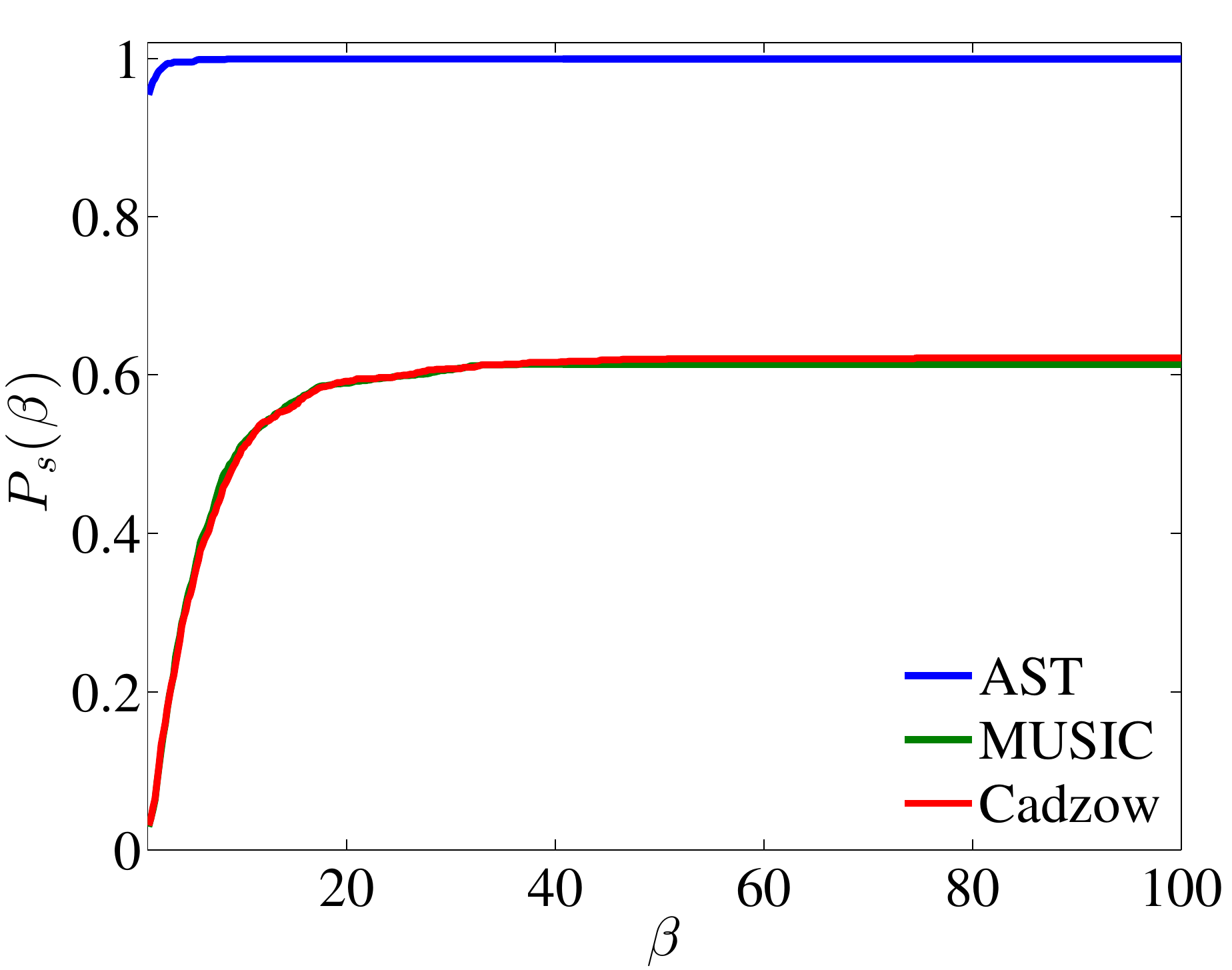} &
\includegraphics[height=42mm]{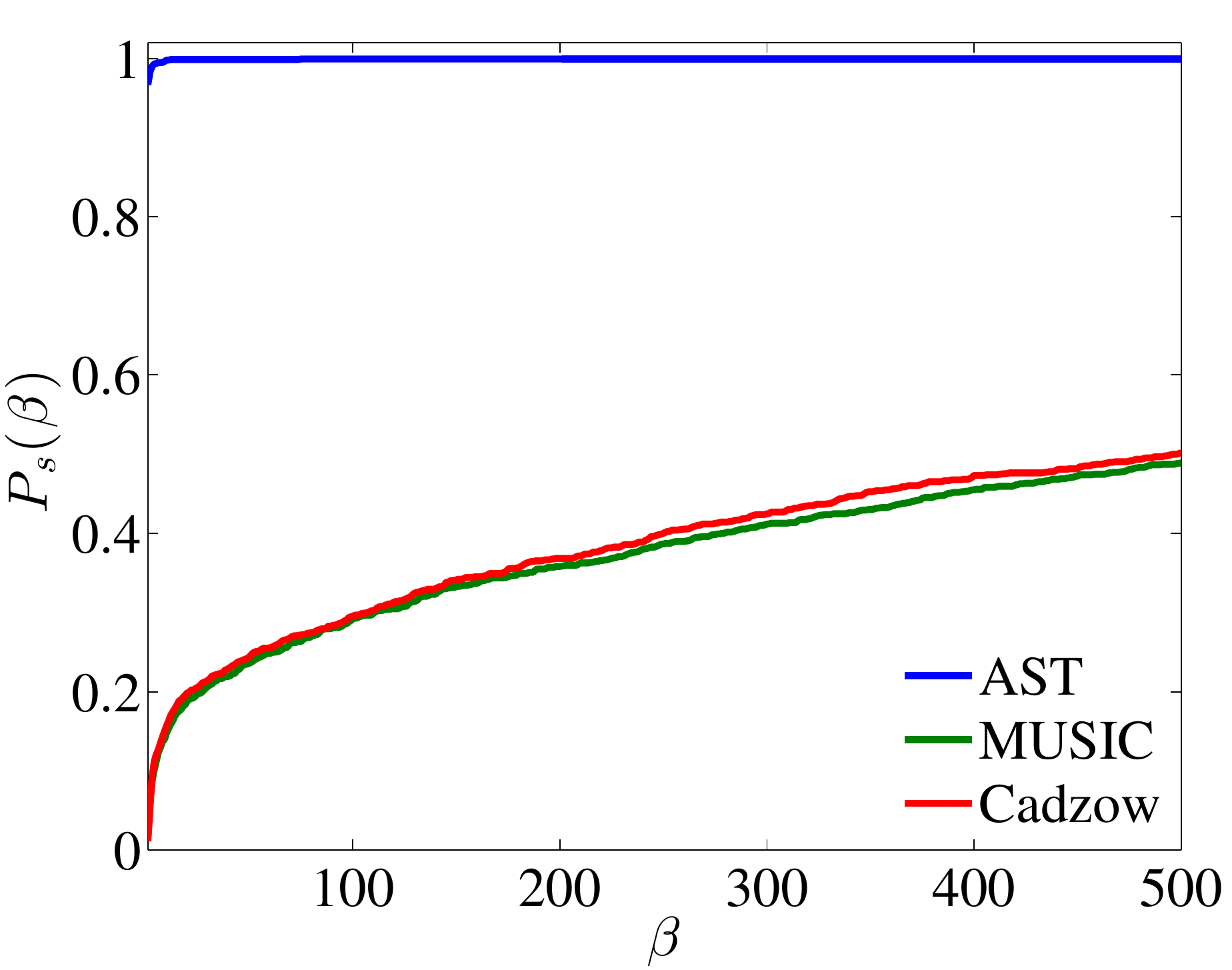} &
\includegraphics[height=42mm]{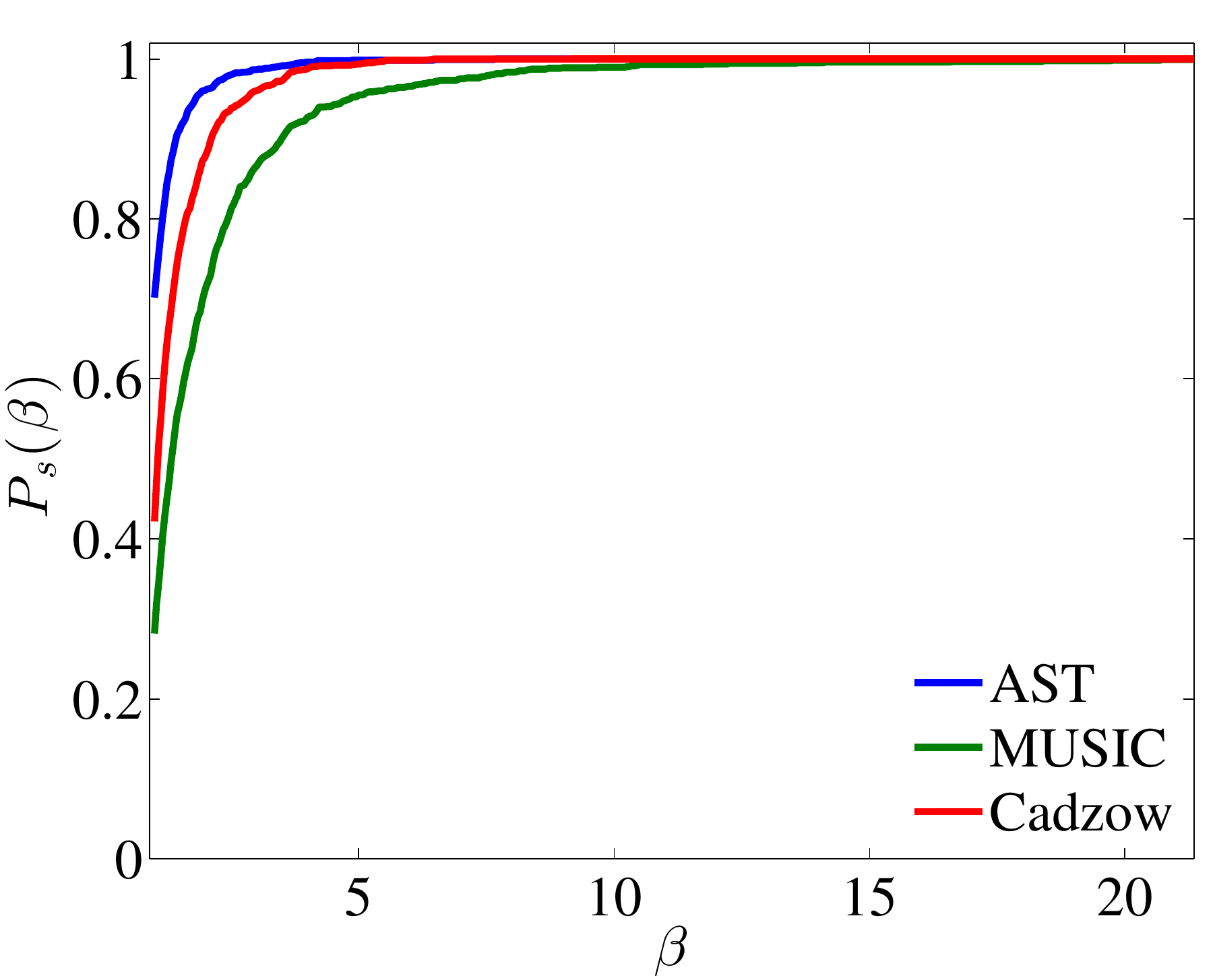}\\
(a) $m_1$ & (b) $m_2$ & (c) $m_3$
\end{tabular}
\caption{ Performance Profiles for AST, MUSIC and Cadzow.
(a) Sum of the absolute value of amplitudes in the far region ($m_1$)
(b) The weighted frequency localization error, $m_2$
(c) Error in approximation of amplitudes in the near region, $m_3$ }
\label{fig:pp}
\end{figure}

The performance profiles in Figure~\ref{fig:pp} show that AST is the best
performing algorithm for all the three metrics.  AST in fact outperforms
MUSIC and Cadzow by a substantial margin for metrics $m_1$ and $m_2$. 

In Figure~\ref{fig:msnr}, we display how the error metrics vary with
increasing SNR for AST, MUSIC and Cadzow.  We restrict these plots to the experiments with $n =
256$ samples. These plots demonstrate that AST localizes frequencies
substantially better than MUSIC and Cadzow even for low signal to noise ratios
as there is very little energy in the far region of the frequencies ($m_1$) and
has the smallest weighted mean square frequency deviation ($m_2$). Although we
have plotted the average value in these plots, we observed spikes in the plots
for Cadzow's algorithm as the average is dominated by the worst performing
instances.  These large errors are due to the numerical instability of polynomial root finding. 

\begin{figure}[htbp]
\begin{tabular}{ccc}
\includegraphics[height=43mm]{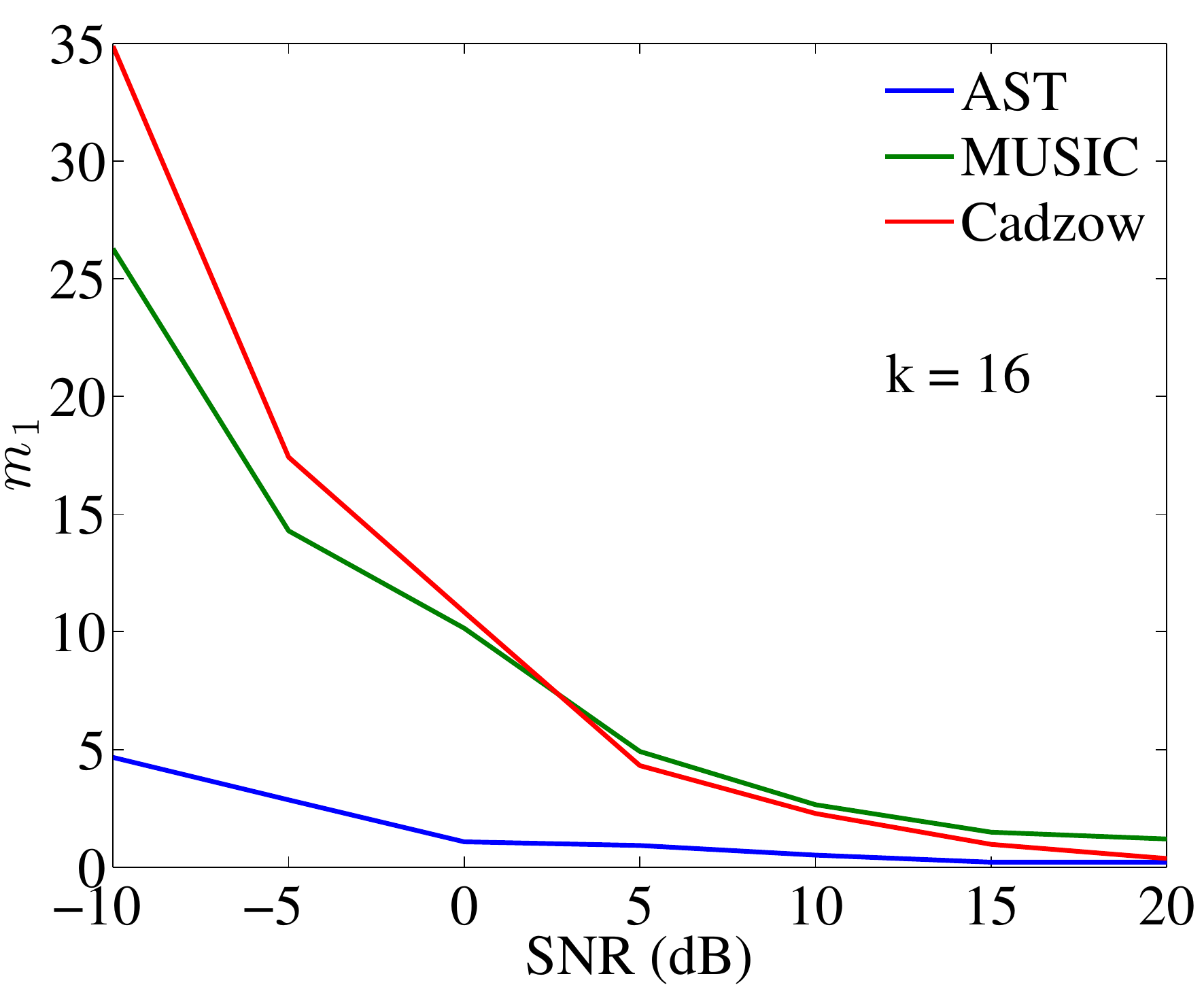} &
\includegraphics[height=43mm]{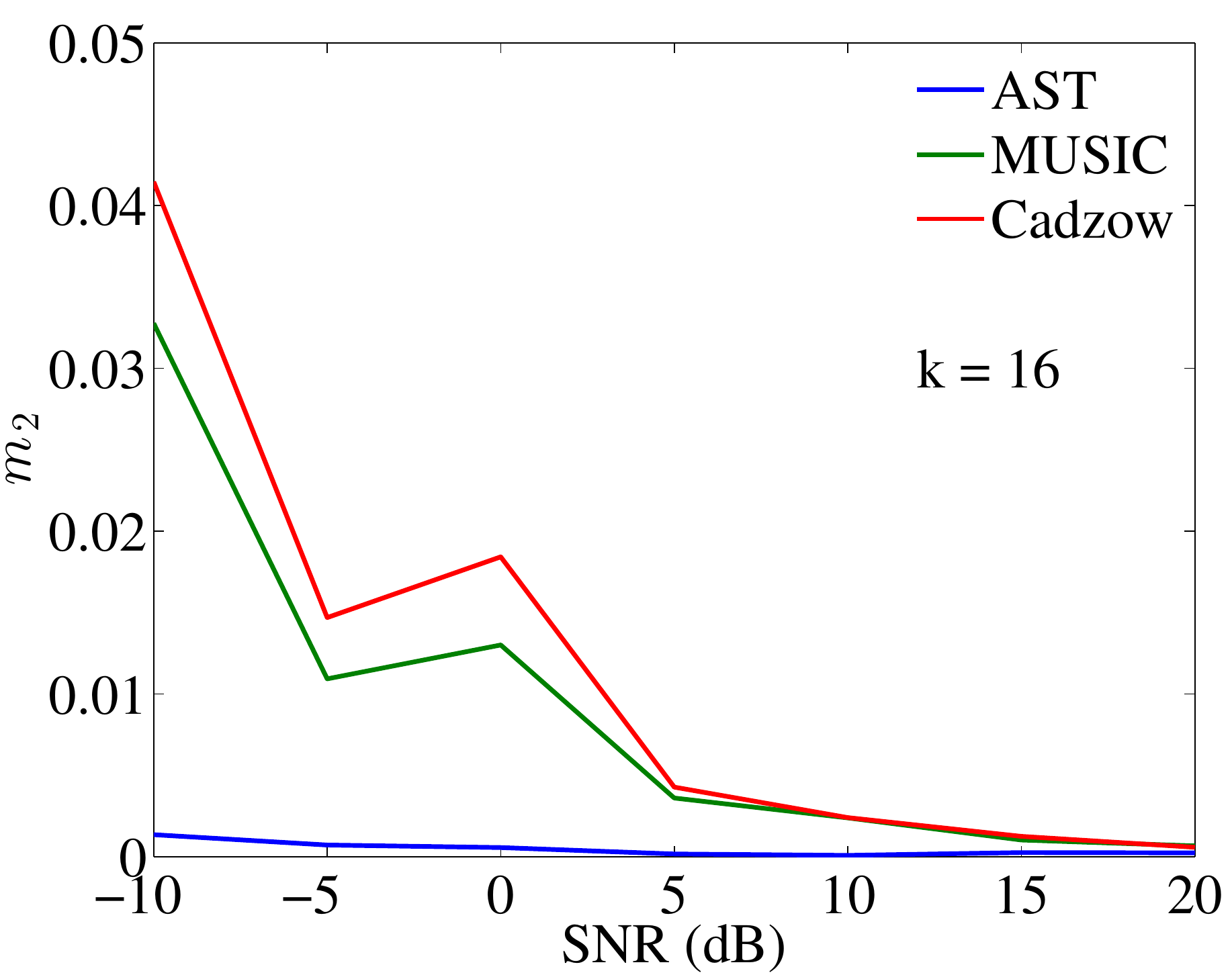} &
\includegraphics[height=43mm]{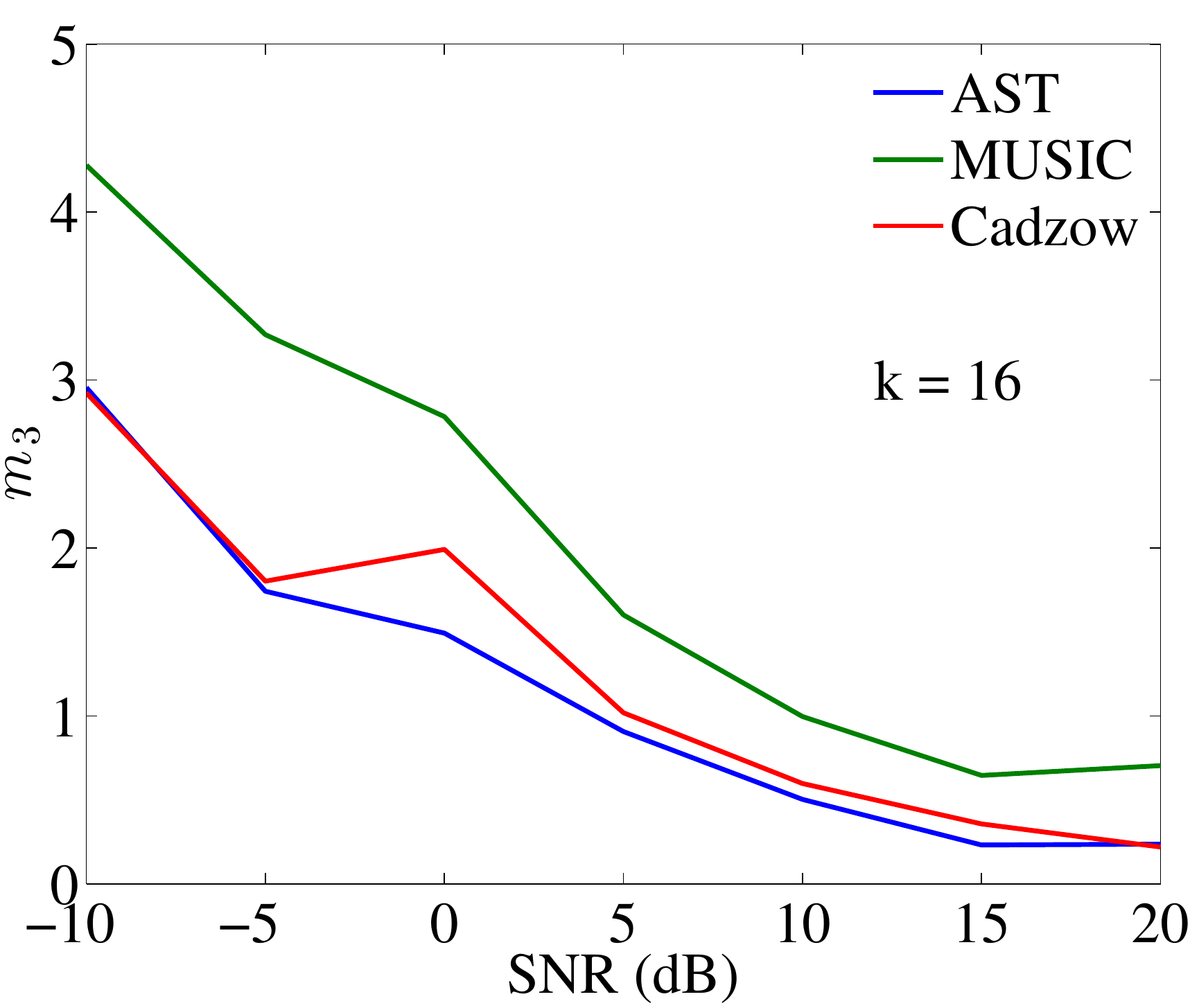} \\
\includegraphics[height=43mm]{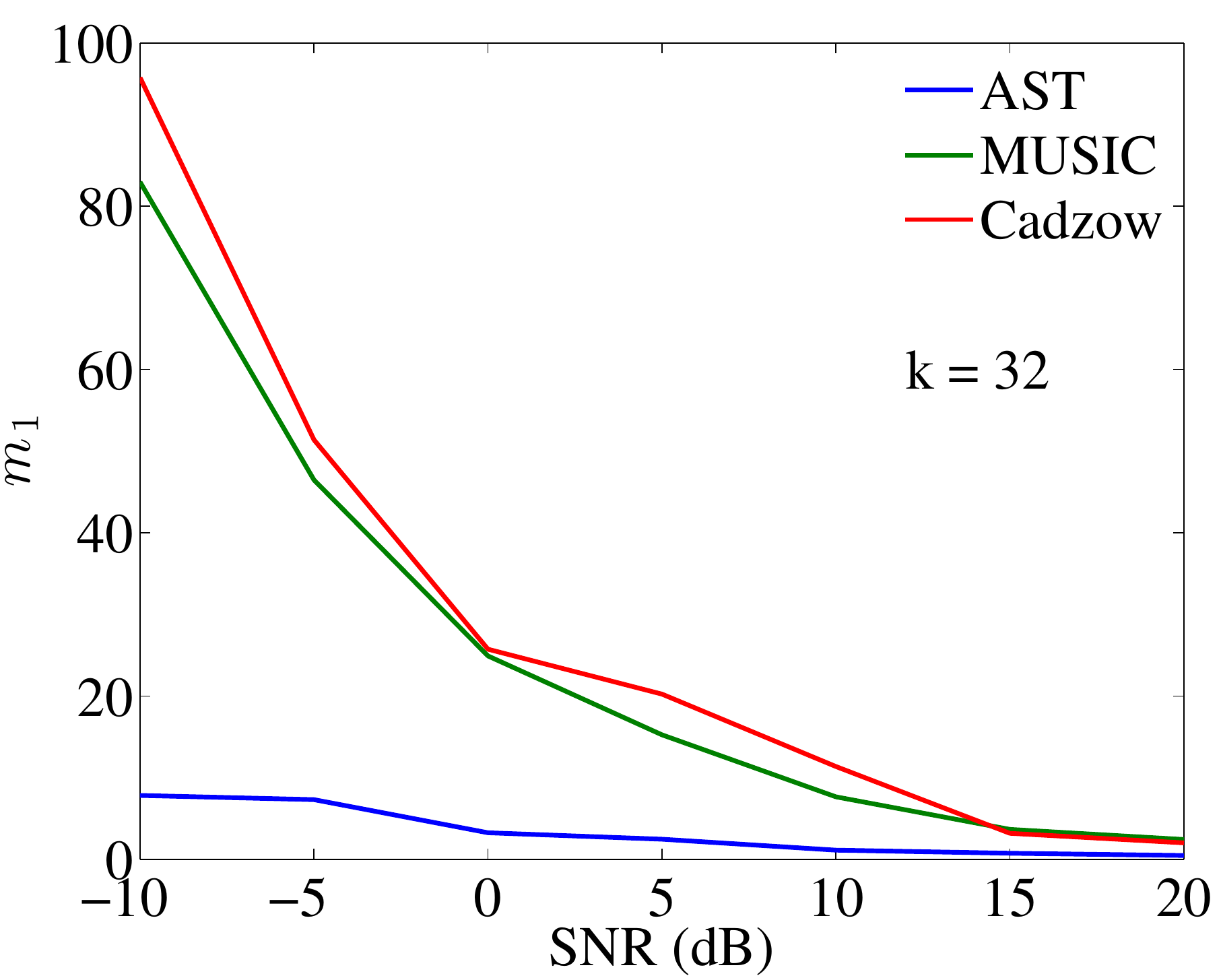} &
\includegraphics[height=43mm]{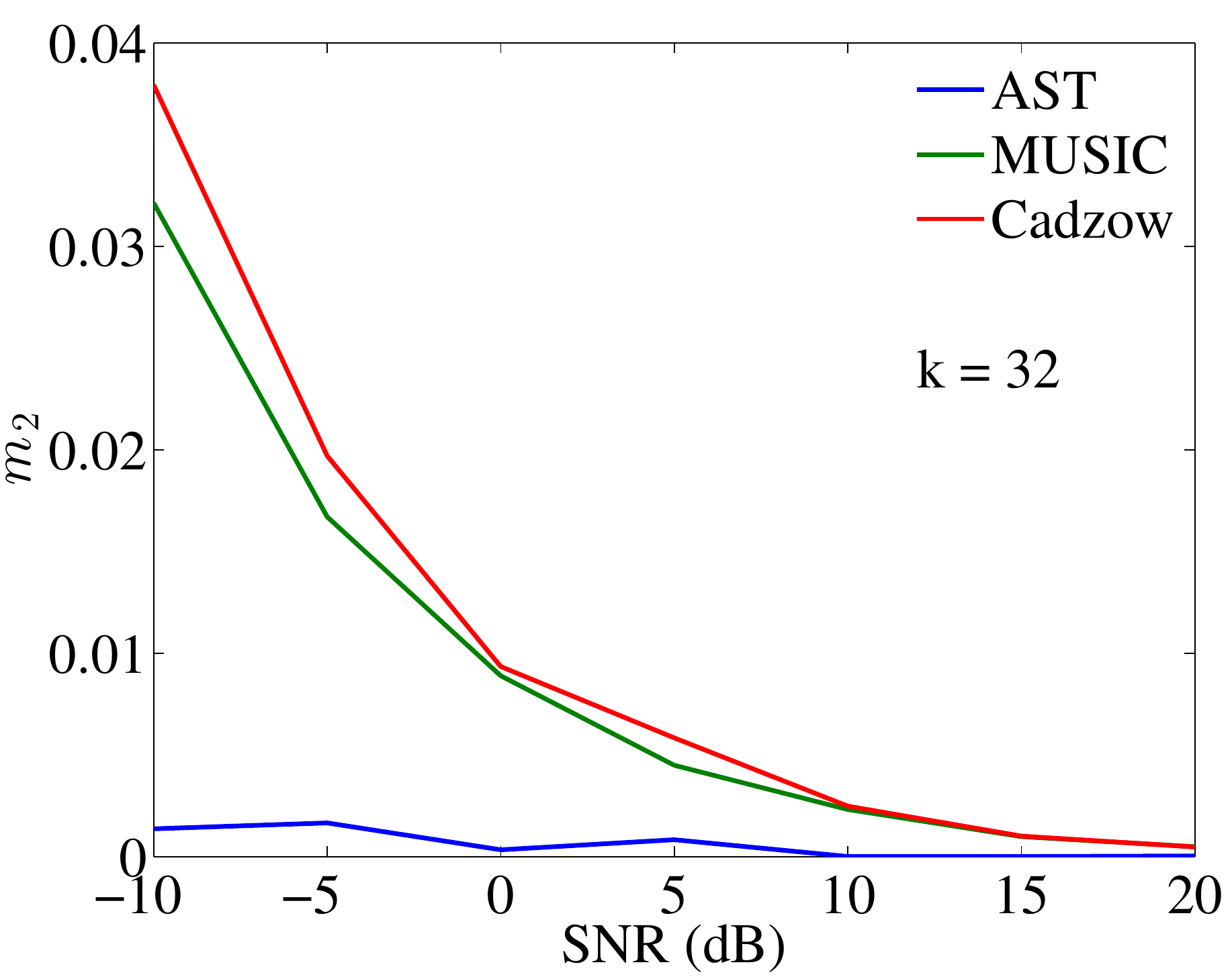} &
\includegraphics[height=43mm]{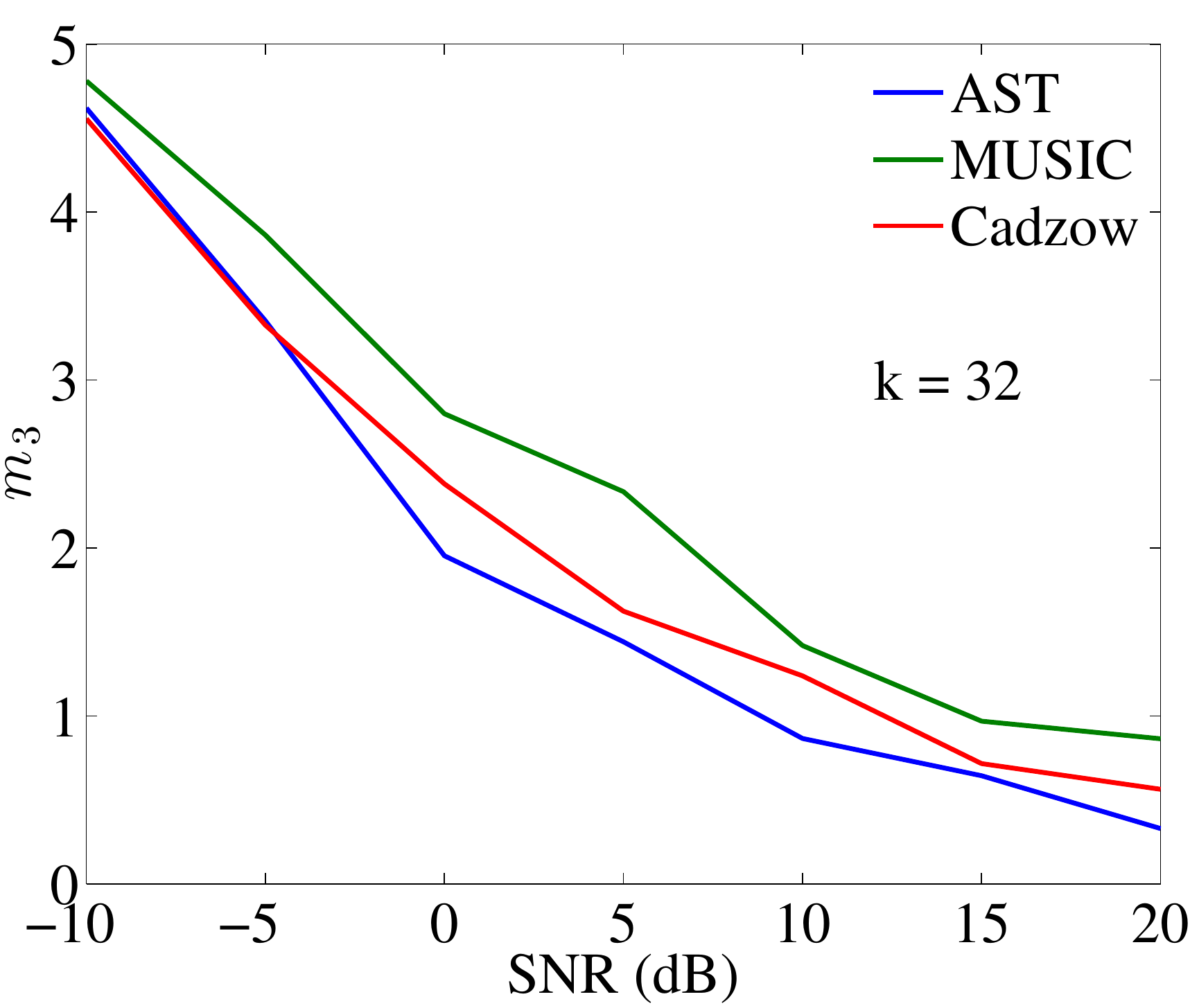} \\
\includegraphics[height=43mm]{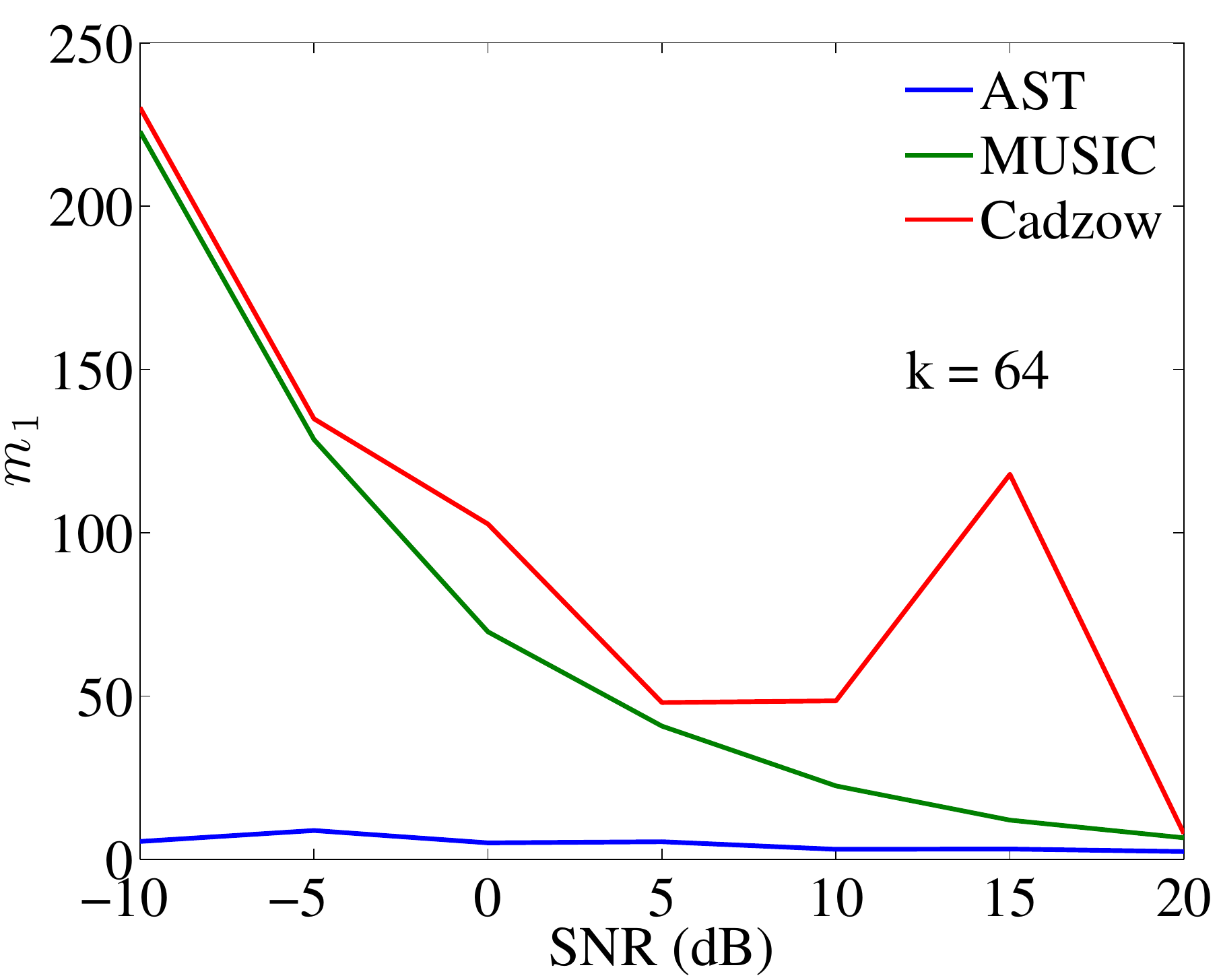} &
\includegraphics[height=43mm]{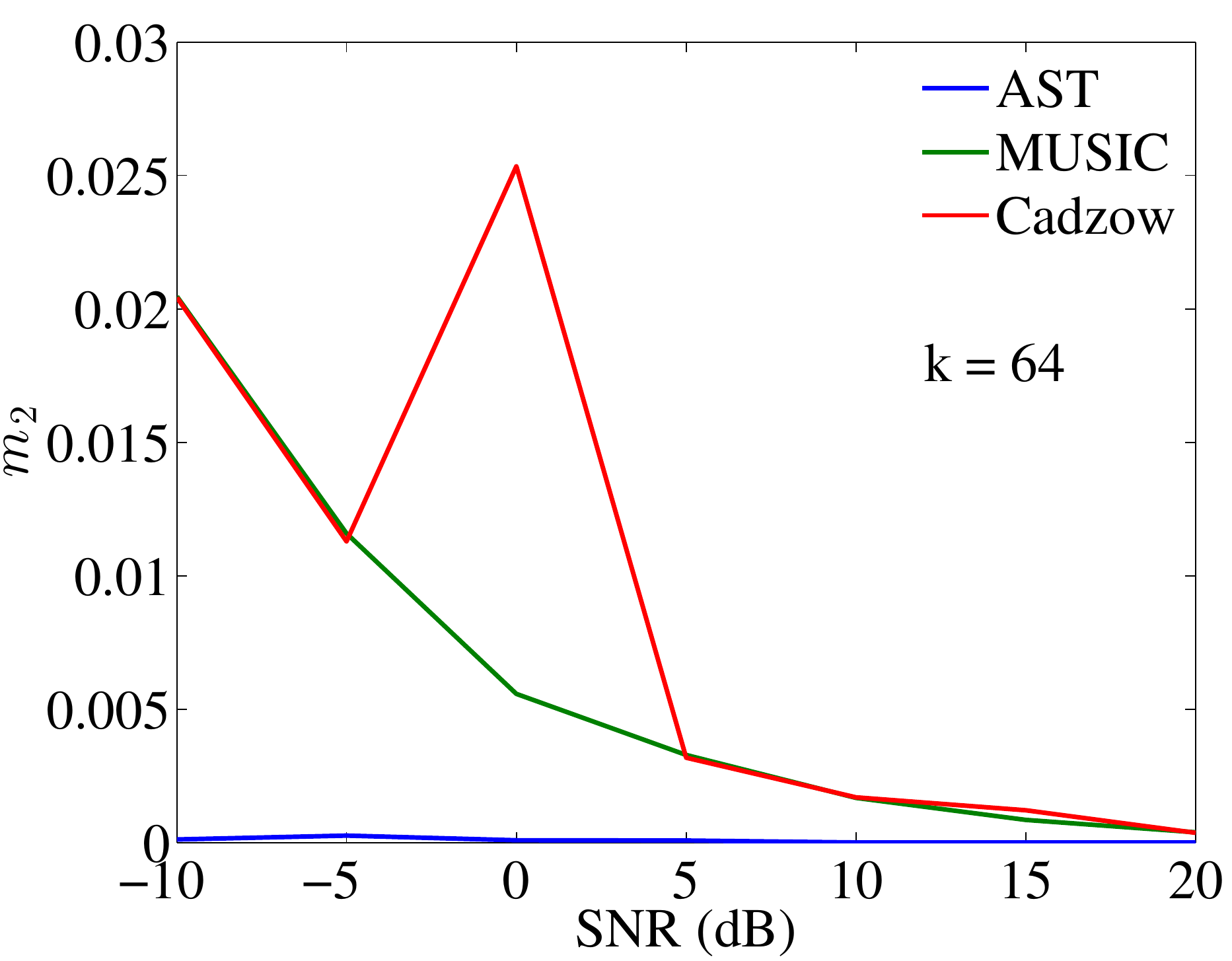} &
\includegraphics[height=43mm]{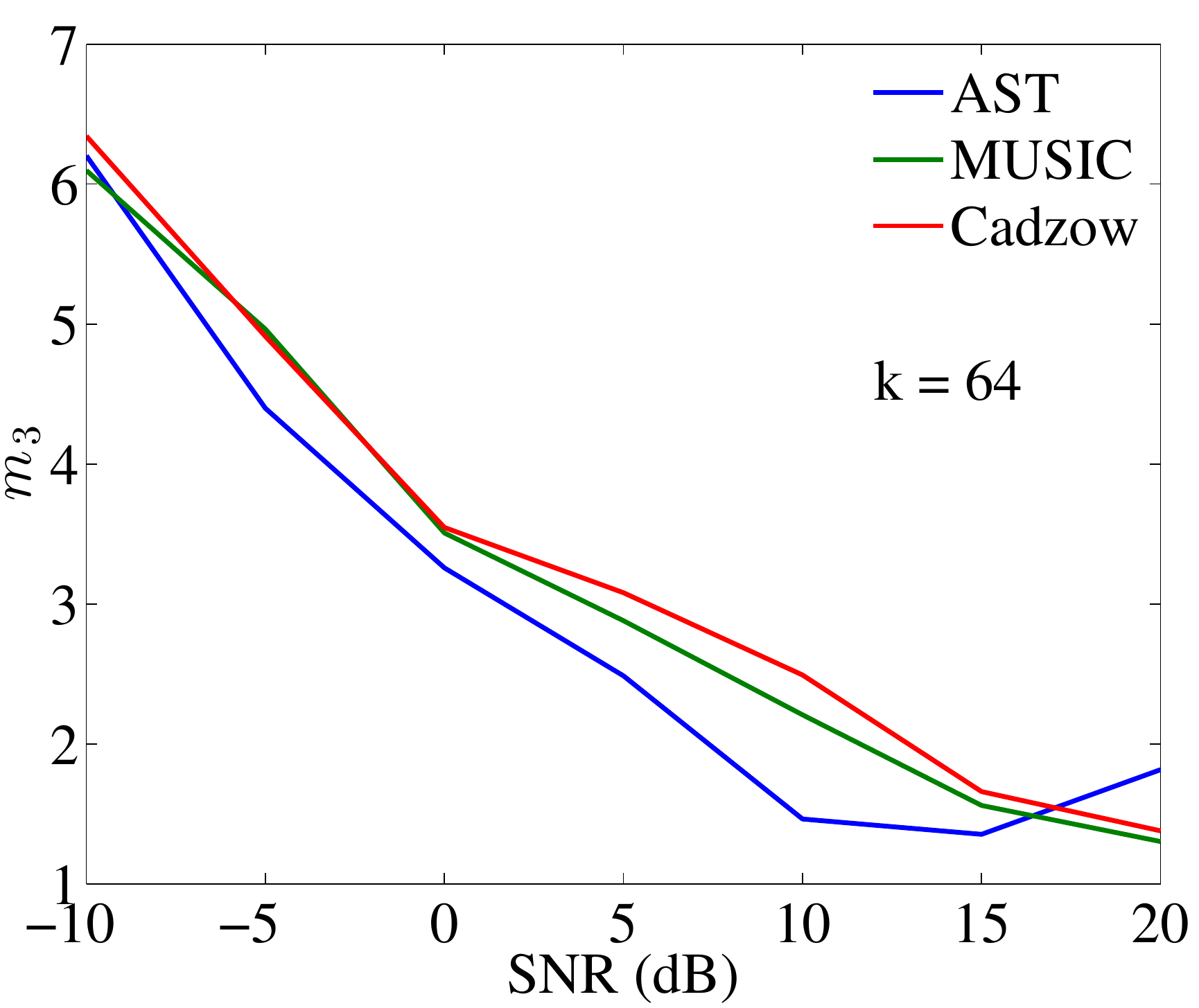}
\end{tabular}
\caption{For $n = 256$ samples, the plots from left to right in order measure the average value over 20 random experiments for the error metrics $m_1, m_2$ and $m_3$ respectively. The top, middle and the bottom third of the plots respectively represent the subset of the experiments
with the number of frequencies $k=16, 32$ and $64$.}
\label{fig:msnr}
\end{figure}

\section{Conclusion and Future Work}

In this paper, we demonstrated stability of atomic norm regularization by
analysis of specific properties of the atomic set of moments and the associated
dual space of trigonometric polynomials.  The key to our analysis is the
existence and properties of various trigonometric polynomials associated with
signals with well separated frequencies.

Though we have made significant progress at understanding the theoretical
limits of line-spectral estimation and superresolution, our bounds could still
be improved. For instance, it remains open as to whether the logarithmic term
in Theorem~\ref{main} can be improved to $\log(n/k)$. Deriving such an upper
bound or improving our minimax lower bound would provide an interesting
direction for future work.

Additionally, it is not clear if our localization bounds in
Theorem~\ref{support} have the optimal dependence on the number of sinusoids
$k$. For instance, we expect that the condition on signal amplitudes for
approximate support recovery should not depend on $k$, by comparison with
similar guarantees that have been established for Lasso~\cite{nearideal}. We
additionally conjecture that for a large enough regularization parameter, there
will be no spurious recovered frequencies in the solution. That is, there
should be no non-zero coefficients in the ``far region'' $F$ in
Theorem~\ref{support}. Future work should investigate whether better guarantees
on frequency localization are possible.

\bibliographystyle{ieeetr}
\bibliography{optand}
\appendix
\section{Proof of Lemma \ref{part1}}\label{apx:pf:taylor}
We first split the domain of integration into the near and far regions.
\begin{align}
\left |\int_0^1 X(f) \nu(df)\right | 
&\leq \left |\int_F X(f) \nu (f)\right | + \sum_{j=1}^k \left | \int_{N_j}X(f) \nu(df)\right |\nonumber \\
&\leq \vnorm{X(f)}_\infty \int_F |\nu| (df) + \sum_{j=1}^k \left | \int_{N_j}X(f) \nu(df) \right |.\label{Xfbd}
\end{align}
by using H\"{o}lder's inequality for the last inequality. Using Taylor's theorem, we may expand the integrand $X(f)$ around $f_j$ as
\[
X(f) = X(f_j) + (f-f_j) X'(f_j) + \frac{1}{2} X''(\xi_j) (f-f_j)^2 
\]
for some $\xi_j \in N_j$. 
Thus,
{\small
\begin{align*}
&|X(f)-X(f_j)-X'(f_j)(f-f_j)|\\
&\leq \sup_{\xi \in N_j} \frac{1}{2}|X''(\xi)|(f -f_j)^2\\ &\leq \frac{1}{2} n^2 \vnorm{X(f)}_\infty(f - f_j)^2, 
\end{align*}
}
where for the last inequality we have used a theorem of Bernstein for trigonometric polynomials (see, for example~\cite{bernstein}):  
\begin{align*}
|X'(f_j)|  & \leq n \vnorm{X(f)}_\infty\\
|X''(f_j)| & \leq n^2 \vnorm{X(f)}_\infty.
\end{align*}
As a consequence, we have
\begin{align*}
\left | \int_{N_j} X(f) \nu(df)\right| &\leq \left| X(f_j)\right| \left| \int_{N_j} \nu (df)\right| + \left|X'(f_j)\right| \left|\int_{N_j} (f-f_j) \nu (df)  \right|\\
& + \frac{1}{2} n^2 \|X(f)\|_\infty \int_{N_j} (f-f_j)^2 |\nu| (df) \\
& \leq \|X(f)\|_\infty \left(I_0^j + I_1^j + I_2^j\right).
\end{align*}
Substituting back into \eqref{Xfbd} yields the desired result.

\section{Some useful lemmas}
\label{apx:collection}
In addition to Theorem~\ref{dual-stab}, we recall another result in~\cite{cg_noisy} where the authors show the existence of a 
trigonometric polynomial $Q_1$ that is linear in each $N_j$ which is also an essential ingredient in our proof.

\begin{theorem}[Lemma 2.7 in~\cite{cg_noisy}]
\label{dual-lin}
For any $f_1, \ldots, f_k$ satisfying \eqref{min-sep} and any sign vector $v \in \C^k$ with $|v_j|=1$, there exists a polynomial $Q_1 = \left<q_1, a(f)\right>$ for some $q_1 \in \C^n$ with the following properties: 
\begin{enumerate}
\item For every $f \in N_j,$ there exists a numerical constant $C_a^1$ such that
\begin{equation}
\label{ca1}
|Q_1(f) - v_j(f-f_j)| \leq \frac{n}{2} C_a^1 (f-f_j)^2
\end{equation}
\item For $f \in F$, there exists a numerical constant $C_b^1$ such that
\begin{equation}
\label{cb1}
|Q_1(f)| \leq \frac{C_b^1}{n}.
\end{equation}
\end{enumerate}
\end{theorem}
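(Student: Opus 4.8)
The plan is to construct $Q_1$ explicitly by the interpolation method of Cand\`es and Fernandez-Granda, mirroring the construction behind the quadratic certificate of Theorem~\ref{dual-stab}. Let $K$ be a concentrated nonnegative trigonometric polynomial of degree at most $m$ --- a suitable power of the Fej\'er kernel as in~\cite{cg_exact12} --- normalized so that $K(0)=1$, $K'(0)=0$, and $K''(0)\asymp -n^2$, and enjoying the rapid tail decay $|K^{(r)}(f)|\lesssim n^r\bigl(n\,d(f,0)\bigr)^{-4}$ for $r=0,1,2$ and $d(f,0)$ bounded below. I would look for $Q_1$ of the form
\[
Q_1(f)=\sum_{l=1}^k \alpha_l\,K(f-f_l)+\sum_{l=1}^k \beta_l\,K'(f-f_l),
\]
which is automatically of the form $\langle q_1,a(f)\rangle$ for some $q_1\in\C^n$, and impose the $2k$ interpolation conditions $Q_1(f_j)=0$ and $Q_1'(f_j)=v_j$ for $j=1,\dots,k$.

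Written in block form, these conditions are the linear system $\bigl(\begin{smallmatrix}D_0 & D_1\\ D_1 & D_2\end{smallmatrix}\bigr)\bigl(\begin{smallmatrix}\alpha\\ \beta\end{smallmatrix}\bigr)=\bigl(\begin{smallmatrix}0\\ v\end{smallmatrix}\bigr)$, where $(D_r)_{jl}=K^{(r)}(f_j-f_l)$ --- the same interpolation matrix that governs the quadratic certificate of Theorem~\ref{dual-stab}, now paired with a different right-hand side. The first and main step is to show this system is invertible and well conditioned under the separation hypothesis~\eqref{min-sep}: the off-diagonal entries of $D_0$ are small, $D_1$ is small entrywise (its diagonal vanishes since $K'(0)=0$), and $D_2$ is a relatively small perturbation of $K''(0)I$ because the decay bound gives $\sum_{l\neq j}|K''(f_j-f_l)|\lesssim n^2\sum_{r\ge1}(4r)^{-4}\ll|K''(0)|$ once the kernel constants are pinned down and $n>256$. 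Solving the near-block-diagonal system then yields $\alpha_l=O(n^{-2})$ (up to the tiny off-diagonal corrections) and $\beta_l\approx K''(0)^{-1}v_l$, so $\beta_l=O(n^{-2})$. Substituting back, using $\sum_l|K(f-f_l)|\lesssim1$ and $\sum_l|K'(f-f_l)|\lesssim\|K'\|_\infty\lesssim n$ (the nearest frequency dominates and the remaining terms form a convergent tail by the decay estimate and the separation), one obtains the global bound $\|Q_1\|_\infty=O(1/n)$ with an absolute constant.

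Granting $\|Q_1\|_\infty=O(1/n)$, both conclusions are immediate. The far-region bound~\eqref{cb1} is just the global bound. For~\eqref{ca1}, Taylor's theorem at $f_j$ gives, for $f\in N_j$,
\[
Q_1(f)=Q_1(f_j)+Q_1'(f_j)(f-f_j)+\tfrac12 Q_1''(\xi)(f-f_j)^2=v_j(f-f_j)+\tfrac12 Q_1''(\xi)(f-f_j)^2
\]
for some $\xi\in N_j$, and Bernstein's inequality for $m$th order trigonometric polynomials gives $|Q_1''(\xi)|\le(2\pi m)^2\|Q_1\|_\infty\lesssim n^2\cdot n^{-1}=n$, which is exactly~\eqref{ca1} with $C_a^1$ an absolute constant. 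The main obstacle is the middle paragraph: establishing a quantitative condition-number bound for the $2k\times2k$ interpolation matrix that is uniform in $k$, and then propagating the coefficient estimates into a sup-norm bound on $Q_1$ whose constant does not grow with $k$ --- this is exactly where the explicit kernel estimates, the quantitative separation $d(f_p,f_q)>4/n$, and the assumption $n>256$ are all used. Since this is precisely Lemma~2.7 of~\cite{cg_noisy}, in the actual write-up I would invoke their construction and estimates rather than redo them.
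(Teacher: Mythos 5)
The paper does not actually prove this statement---it is stated as a citation of Lemma~2.7 in~\cite{cg_noisy}, and the only information the paper uses about the construction appears later in the proof of Lemma~\ref{l1}, namely that $Q_1$ has the kernel-superposition form~\eqref{formofQ} with coefficient bounds $\|\alpha_1\|_\infty \leq C_{\alpha_1}/n$ and $\|\beta_1\|_\infty \leq C_{\beta_1}/n^2$. Your sketch is an accurate outline of the construction in~\cite{cg_noisy}: the form $Q_1 = \sum_l \alpha_l K(f-f_l) + \sum_l \beta_l K'(f-f_l)$, the interpolation conditions $Q_1(f_j)=0$, $Q_1'(f_j)=v_j$, the invertibility of the $2k\times 2k$ interpolation matrix under the separation condition, and the resulting sup-norm bound $\|Q_1\|_\infty = O(1/n)$ from which both conclusions follow. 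You also correctly identify the one genuinely nontrivial step (the uniform condition-number bound on the interpolation matrix) and defer to the reference for it, which is exactly what the paper does.

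Two minor remarks. First, your estimate $\alpha_l = O(n^{-2})$ is slightly too optimistic: the leading-order balance does give $\alpha_j \approx 0$, but the correction from the coupling term $\sum_{l} \beta_l K'(f_j - f_l)$ is of size $O(n^{-2})\cdot O(n) = O(n^{-1})$, which is precisely why the paper (citing the proof of Lemma~2.7 in~\cite{cg_noisy}) records $\|\alpha_1\|_\infty \leq C/n$ rather than $C/n^2$. This does not affect your conclusion, since both the $\alpha$-term and the $\beta$-term then contribute $O(1/n)$ to $\|Q_1\|_\infty$. Second, your derivation of the local quadratic bound~\eqref{ca1} by applying Bernstein's inequality directly to the global bound $\|Q_1\|_\infty = O(1/n)$ is a clean shortcut; it is slightly different from the more hands-on Taylor estimates of the kernel that~\cite{cg_noisy} carry out in each $N_j$, but it is valid and yields an absolute constant. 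Overall the proposal is faithful to the cited construction and correct where it differs in detail.
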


We will also need the following straightforward consequence of the constructions of the polynomials in Theorem \ref{dual-stab}, Theorem \ref{dual-lin}, and Section \ref{sec:support}.
\begin{lemma}
\label{l1}
There exists a numerical constant $C$ such that the constructed $Q(f)$ in Theorem \ref{dual-stab}, $Q_1(f)$ in Theorem \ref{dual-lin}, and $Q_j^\star(f)$ in Section \ref{sec:support} satisfy respectively
\begin{align}
\|Q(f)\|_1 &:= \int_0^1{| Q(f)| df} \leq \frac{C k}{n}\label{QL1}\\
\| Q_1(f)\|_1 &\leq \frac{C k}{n^2}\label{Q1L1}\\
\|Q_j^\star\|_1 & \leq \frac{Ck}{n}\label{QjL1}.
\end{align}
\end{lemma}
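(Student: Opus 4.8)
\textbf{Proof proposal for Lemma~\ref{l1}.}
The plan is to exploit the explicit form of all three polynomials. In each of the three constructions---$Q$ in Theorem~\ref{dual-stab}, $Q_1$ in Theorem~\ref{dual-lin}, and $Q_j^\star$ in Section~\ref{sec:support}---the polynomial is built in the cited references as a linear combination of shifts of a single fixed interpolation kernel $\bar K$ and its derivative $\bar K'$, centered at the supporting frequencies:
\[
P(f) = \sum_{l=1}^{k}\Big( \alpha_l\,\bar K(f - f_l) + \beta_l\,\bar K'(f - f_l)\Big),
\]
where $\bar K$ is, up to constants, a fixed power of the Fej\'er kernel of order $\Theta(n)$, normalized so that $\bar K(0)=1$. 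Since the $L^1$ norm on $\mathbb{T}$ is translation invariant, the triangle inequality gives $\|P\|_1 \le \|\bar K\|_1 \sum_l |\alpha_l| + \|\bar K'\|_1 \sum_l |\beta_l|$. Thus the lemma reduces to two ingredients: (a) size estimates for $\|\bar K\|_1$ and $\|\bar K'\|_1$, and (b) $n$-dependent bounds on the interpolation coefficients $\alpha_l,\beta_l$ in each of the three cases.

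For (a), because $\bar K$ is concentrated on a window of width $\Theta(1/n)$ around the origin with $\bar K(0)=1$ and has the fast polynomial decay of the Fej\'er kernel, a direct computation---or Cauchy--Schwarz applied to the Fourier coefficients of $\bar K$ and $\bar K'$---yields $\|\bar K\|_1 = O(1/n)$ and $\|\bar K'\|_1 = O(1)$; the latter is just a bound on the total variation of $\bar K$.

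For (b), I would read off the coefficient scalings from the linear systems that define them in~\cite{cg_noisy,granda2}. These systems are diagonally dominant under the separation hypothesis~\eqref{min-sep}, with diagonal blocks governed by $\bar K(0)=1$, $\bar K'(0)=0$ and $\bar K''(0)=-\Theta(n^2)$, and with off-diagonal entries small by the near-orthogonality of well-separated bumps. For $Q$ and for $Q_j^\star$, whose interpolation data are $O(1)$ values and vanishing derivatives at the $f_l$, this forces $\sup_l|\alpha_l| = O(1)$ and $\sup_l|\beta_l| = O(1/n)$, so $\|Q\|_1,\|Q_j^\star\|_1 \le k\big(O(1)\cdot O(1/n) + O(1/n)\cdot O(1)\big) = O(k/n)$, which is~\eqref{QL1} and~\eqref{QjL1}. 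For $Q_1$, whose interpolation data are vanishing values and unit-size derivatives at the $f_l$, solving against the $-\Theta(n^2)$ diagonal produces $\sup_l|\beta_l| = O(1/n^2)$ and, by back-substitution through the small off-diagonal couplings, $\sup_l|\alpha_l| = O(1/n)$; hence $\|Q_1\|_1 \le k\big(O(1/n)\cdot O(1/n) + O(1/n^2)\cdot O(1)\big) = O(k/n^2)$, which is~\eqref{Q1L1}.

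The routine part is the kernel estimates in (a). The main obstacle is the bookkeeping in (b): the references establish the near-orthogonality of the shifted bumps and bound the coefficient vectors, but in normalizations tailored to their own arguments, so some care is needed to extract the clean powers of $n$ above---in particular the fact that every ``derivative coefficient'' $\beta_l$ carries an extra factor $1/n$ relative to the corresponding ``value coefficient'', which is exactly what makes the two terms in the triangle-inequality bound balance.
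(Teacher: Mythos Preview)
Your proposal is correct and follows essentially the same route as the paper: write each polynomial as $\sum_l \alpha_l K(f-f_l)+\beta_l K'(f-f_l)$ with $K$ the squared Fej\'er kernel, apply the triangle inequality to get $\|P\|_1\le k\|\alpha\|_\infty\|K\|_1+k\|\beta\|_\infty\|K'\|_1$, and combine the kernel estimates $\|K\|_1=O(1/n)$, $\|K'\|_1=O(1)$ with the coefficient scalings in the three cases. The only difference is that the paper simply quotes the coefficient bounds $\|\alpha\|_\infty,\|\beta\|_\infty$ directly from \cite{cg_exact12,cg_noisy,granda2} rather than re-deriving them from the diagonally dominant linear system as you sketch, and it computes $\|K\|_1$ and $\|K'\|_1$ explicitly via Parseval on the Fej\'er kernel $G$ (writing $K=G^2$ and $K'=2GG'$) rather than appealing to concentration heuristics; your Cauchy--Schwarz suggestion is exactly the step used for $\|K'\|_1$.
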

\begin{proof}
We will give a detailed proof of \eqref{QL1}, and list the necessary modifications for proving \eqref{Q1L1} and \eqref{QjL1}. The dual polynomial $Q(f)$ constructed in \cite{cg_exact12} is of the
form
\begin{eqnarray}
  Q \left( f \right) & = & \sum_{f_j \in T} \alpha_j K \left( f - f_j \right)
  + \sum_{f_j \in T} \beta_j K' \left( f - f_j \right) \label{formofQ}
\end{eqnarray}
where $K \left( f \right)$ is the squared Fej\'er kernel (recall that $m = (n-1)/2$)
\begin{eqnarray*}
  K \left( f \right) & = & \left( \frac{\sin \left( \left( \frac{m}{2} + 1
  \right) \pi f \right)}{\left( \frac{m}{2} + 1 \right) \sin \left( \pi f
  \right)} \right)^4
\end{eqnarray*}
 and for $n \geq 257$, the coefficients
$\alpha \in \C^k$ and $\beta \in \C^k$ satisfy \cite[Lemma 2.2]{cg_exact12}
\begin{eqnarray*}
  \left\| \alpha \right\|_{\infty} & \leq & C_\alpha\\
  \left\| \beta \right\|_{\infty} & \leq & \frac{C_\beta}{n} 
\end{eqnarray*}
for some numerical constants $C_\alpha$ and $C_\beta$. 
Using \eqref{formofQ} and triangle inequality, we bound $\|Q(f)\|_1$ as follows: 
\begin{eqnarray}
  \|Q(f)\|_1 &=& \int_0^1 \left| Q \left( f \right) \right| d f\nonumber\\
  &\leq & k \left\| \alpha \right\|_\infty \int_0^1 \left| K \left( f\right) \right| d f + k \left\| \beta \right\|_\infty \int_0^1 \left| K' \left( f\right) \right| d f\label{Qbdgeneral1}\\
  & \leq & C_\alpha k \int_0^1 \left| K \left(f\right) \right| d f + \frac{C_\beta}{n} k \int_0^1 \left|K'(f)\right|d f \label{Q1bd},
 \end{eqnarray}
 
To continue, note that $\int_0^1 | K ( f) | d  f = \int_0^1 | G ( f) |^2 d
 f =: \|G(f)\|_2^2$ where $G ( f)$ is the Fej\'er kernel, since $K(f)$ is the squared
Fej\'er kernel. We can write
\begin{eqnarray}
  G ( f)  =  \left( \frac{\sin \left( \pi \left( \frac{m}{2} + 1 \right) f
  \right)}{\left( \frac{m}{2} + 1 \right) \sin ( \pi f)} \right)^2
   =  \sum_{l = - m / 2}^{m / 2} g_l e^{- i 2 \pi f l}\label{expressionG}
\end{eqnarray}
where $g_l = \left( \frac{m}{2} + 1 - | l | \right) / \left( \frac{m}{2} + 1
\right)^2$. Now, by using Parseval's identity, we obtain
\begin{eqnarray}
 \int_0^1 |K(f)| df & = & \int_0^1 | G ( f) |^2 d f
  =  \sum_{l = - m / 2}^{m / 2} | g_l |^2\nonumber\\
  & = & \frac{1}{\left( \frac{m}{2} + 1 \right)^4} \left( \left( \frac{m}{2}
  + 1 \right)^2 + 2 \sum_{l = 1}^{m / 2} \left( \frac{m}{2} + 1 - l \right)^2
  \right)\nonumber\\
  & = & \frac{1}{\left( \frac{m}{2} + 1 \right)^4} \left( \left( \frac{m}{2}
  + 1 \right)^2 + 2 \sum_{l = 1}^{m / 2} l^2 \right)\nonumber\\
  & \leq & \frac{C}{n}\label{bdKf}
\end{eqnarray}
for some numerical constant $C$ when $n = 2 m + 1 \geq 10$.

Now let us turn our attention to $\int_0^1 | K' (
f) | d  f$. Since $K ( f) = G ( f)^2$, we have
\begin{eqnarray}
  \int_0^1 | K' ( f) | d  f  =  2\int_0^1 | G ( f) G' ( f) | d
   f
   \leq  2\| G ( f) \|_2 \| G' ( f) \|_2\label{holder}
\end{eqnarray}
We have already established that $\| G ( f) \|_2^2 \leq C / {n}$ and we
will now show that $\| G' ( f) \|_2^2 \leq C' {n}$. Differentiating the
expression for $G(f)$ in \eqref{expressionG}, we get
\begin{eqnarray*}
  G' ( f) & = & -2 \pi i \sum_{l = - m / 2}^{m / 2} l g_l e^{- i 2 \pi f l}
\end{eqnarray*}
Therefore, by applying Parseval's identity again, we get
\begin{eqnarray*}
  \| G' ( f) \|_2^2 
  & = & 4 \pi^2 \sum_{l = - m / 2}^{m / 2} l^2 | g_l |^2\\
  & \leq &  \pi^2 m^2 \sum_{l = - m / 2}^{m / 2} | g_l |^2\\
  & \leq & C' n
\end{eqnarray*}
Plugging back into \eqref{holder} yields
\begin{align}
\int_0^1 |K'(f)| df \leq C \label{bdK1f}
\end{align}
for some constant $C$. Combining \eqref{bdK1f} and \eqref{bdKf} with \eqref{Q1bd} gives the desired result in \eqref{QL1}.

The dual polynomial $Q_1(f)$ is also of the form \eqref{formofQ} with coefficient vectors $\alpha_1$ and $\beta_1$, which satisfy \cite[Proof of Lemma 2.7]{cg_noisy}
\begin{align*}
\|\alpha_1\|_\infty \leq \frac{C_{\alpha_1}}{n},\\
\|\beta_1\|_\infty \leq \frac{C_{\beta_1}}{n^2}.
\end{align*}
Combining the above two bounds with \eqref{Qbdgeneral1}, \eqref{bdK1f} and \eqref{bdKf} gives the desired result in \eqref{Q1L1}.

The last polynomial $Q_j^\star$ also has the form \eqref{formofQ} with coefficient vectors $\alpha^\star$ and $\beta^\star$. According to \cite[Proof of Lemma 2.2]{granda2}, these coefficients satisfy
\begin{align*}
\|\alpha^\star\|_\infty \leq {C_{\alpha_\star}},\\
\|\beta_\star\|_\infty \leq \frac{C_{\beta_\star}}{n},
\end{align*}
which yields \eqref{QjL1} following the same argument leading to \eqref{QL1}. 

\end{proof}

Using Lemma \ref{l1}, we can derive the estimates we need in the following lemma.
\begin{lemma}
\label{l4}
Let $\nu = \hat{\mu} - \mu$ be the difference measure. Then, there exists numerical constant $C>0$ such that
\begin{align}
\label{qv}\left| \int_0^1 Q(f) \nu(df) \right| &\leq \frac{C k \tau}{n}\\
\label{q1v}\left| \int_0^1 Q_1(f) \nu(df) \right| &\leq \frac{C k \tau}{n^2}\\
\label{qjv} \left| \int_0^1 Q_j^\star(f) \nu(df) \right| & \leq \frac{Ck\tau}{n}.
\end{align}
\end{lemma}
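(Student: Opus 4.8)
The plan is to observe that integrating any $m$th order trigonometric polynomial against the difference measure $\nu$ is nothing but an inner product of its coefficient vector with the error vector $e = x^\star - \hat{x}$, and then to apply H\"older's inequality in the dual pairing between $L^1$ and $L^\infty$ on the torus. The three $L^1$ estimates that this requires --- on $\|Q\|_1$, $\|Q_1\|_1$, and $\|Q_j^\star\|_1$ --- are exactly the content of Lemma~\ref{l1}, while the matching $L^\infty$ bound on the error polynomial, $\|E\|_\infty \le 3\tau$, has already been established with high probability in \eqref{errbd}. So all that is left is to connect these pieces.

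Concretely, the first step is to record the identity
\[
\int_0^1 P(f)\,\nu(df) \;=\; \langle p, e\rangle \;=\; \int_0^1 P(f)\,\overline{E(f)}\,df
\]
valid for every trigonometric polynomial $P(f) = \langle p, a(f)\rangle$ of order $m$, where $E(f) = \langle e, a(f)\rangle$ is the error polynomial introduced in the proof of Theorem~\ref{main}. The first equality comes from interchanging the integral against $\nu$ with the finite sum over $j \in \{-m,\dots,m\}$ that defines $P$ and then recognizing that $\int_0^1 a(f)\,\nu(df) = e$; this interchange is legitimate because $\nu = \hat{\mu}-\mu$ has finite total variation, since $\|\hat{\mu}\|_{\mathrm{TV}} = \|\hat{x}\|_{\A} < \infty$ and $\mu$ is $k$-atomic. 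The second equality is Parseval's identity. Taking absolute values and applying H\"older then gives, for every such $P$,
\[
\left|\int_0^1 P(f)\,\nu(df)\right| \;\le\; \|P\|_1\,\|E\|_\infty .
\]

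Finally I would specialize $P$ to each of $Q$ (from Theorem~\ref{dual-stab}), $Q_1$ (from Theorem~\ref{dual-lin}), and $Q_j^\star$ (from Section~\ref{sec:support}); each is an $m$th order trigonometric polynomial, so the displayed bound applies verbatim. On the high-probability event of \eqref{errbd} we have $\|E\|_\infty \le (1+2\eta^{-1})\tau \le 3\tau$, and Lemma~\ref{l1} supplies $\|Q\|_1 \le Ck/n$, $\|Q_1\|_1 \le Ck/n^2$, and $\|Q_j^\star\|_1 \le Ck/n$; multiplying these yields \eqref{qv}, \eqref{q1v}, and \eqref{qjv}, respectively, on the same event. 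There is no genuine obstacle here: the only points needing care are verifying the finite--total-variation condition that legitimizes the interchange of sum and integral, and noting that the conclusion inherits the high-probability qualifier from \eqref{errbd}. The substantive estimate --- that these squared-Fej\'er-kernel-based polynomials have $L^1$ norm of order $k/n$ (respectively $k/n^2$) --- was already settled in Lemma~\ref{l1}, so the present lemma is essentially a corollary.
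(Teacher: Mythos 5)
Your proposal is correct and follows essentially the same route as the paper: both rewrite $\int_0^1 Q_0(f)\,\nu(df)$ as $\langle q_0, e\rangle$ by pulling the finite sum out of the integral, pass to $\langle Q_0, E\rangle$ via Parseval, apply H\"older in the $L^1$/$L^\infty$ pairing on the torus, and then invoke Lemma~\ref{l1} together with the bound $\|E\|_\infty \le 3\tau$ from \eqref{errbd}. The only cosmetic difference is that you explicitly flag the finite--total-variation condition justifying the interchange of sum and integral, which the paper leaves implicit.
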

\begin{proof}
Let $Q_0 = \langle q_0, a(f) \rangle $ be a general trigonometric polynomial associated with $q_0 \in \C^n$. Then,
\begin{align*}
\left|\int_0^1 Q_0(f) \nu(df) \right| 
& = \left|\int_0^1 \langle q_0 , a(f) \rangle  \nu(df) \right|\\
& = \left|\langle q_0,  \int_0^1  a(f)  \nu(df) \rangle\right|\\
& = \left|\langle q_0, e \rangle\right|\\
& = \left|\langle Q_0(f), E(f) \rangle\right|\\
& \leq \vnorm{Q_0(f)}_1 \vnorm{E(f)}_\infty\,.
\end{align*}
Here we use Parseval's identity in the second to last step and H\"{o}lder's inequality in the last inequality. Then, the result follows by using Lemma~\ref{l1} and \eqref{errbd}.
\end{proof}

We also need the following consequence of the optimality condition of AST from~\cite[Lemma 2]{btr12}:
\begin{proposition}\label{pro:optimality}
\begin{align}
\tau \vnorm{\hat{x}}_\A \leq \tau \vnorm{x^\star}_\A + \langle w, \hat{x} - x^\star \rangle
\end{align}
\end{proposition}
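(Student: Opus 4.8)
The plan is to obtain this inequality directly from the fact that $\hat{x}$ minimizes the AST objective in \eqref{atnorm-denoise}, via the standard ``basic inequality'' argument familiar from $\ell_1$-regularized estimation; no structural property of the atomic set $\A$ is needed beyond the fact that $\vnorm{\cdot}_\A$ is a genuine norm on $\C^n$, so that the objective is proper, convex and coercive and a minimizer exists. First I would compare $\hat{x}$ against the feasible competitor $x^\star$: optimality gives
\[
\tfrac{1}{2}\vnorm{y - \hat{x}}_2^2 + \tau\vnorm{\hat{x}}_\A \;\leq\; \tfrac{1}{2}\vnorm{y - x^\star}_2^2 + \tau\vnorm{x^\star}_\A .
\]

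Next I would substitute the noise model $y = x^\star + w$, writing $e := \hat{x} - x^\star$ so that $y - x^\star = w$ and $y - \hat{x} = w - e$, and expand the squared norm on the left:
\[
\vnorm{y - \hat{x}}_2^2 = \vnorm{w}_2^2 - 2\,\Re\langle w, e\rangle + \vnorm{e}_2^2 .
\]
Inserting this into the optimality inequality and cancelling the common $\tfrac{1}{2}\vnorm{w}_2^2$ term yields
\[
\tau\vnorm{\hat{x}}_\A \;\leq\; \tau\vnorm{x^\star}_\A + \Re\langle w, e\rangle - \tfrac{1}{2}\vnorm{e}_2^2 \;\leq\; \tau\vnorm{x^\star}_\A + \Re\langle w, \hat{x} - x^\star\rangle ,
\]
where the last step simply drops the nonpositive term $-\tfrac{1}{2}\vnorm{e}_2^2$. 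This is exactly the claimed bound, with the inner product in the statement read as its real part.

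There is essentially no obstacle here: the statement is an elementary consequence of optimality, and the only thing to watch is the bookkeeping with complex inner products, so that the cross term from the expansion is $2\,\Re\langle w, e\rangle$ and the final estimate is naturally phrased with the real part. Alternatively, one can bypass the squared-norm expansion entirely by invoking the first-order optimality condition $y - \hat{x} \in \tau\,\partial\vnorm{\hat{x}}_\A$, i.e.\ $x^\star - \hat{x} + w \in \tau\,\partial\vnorm{\hat{x}}_\A$, and pairing a subgradient $g \in \partial\vnorm{\hat{x}}_\A$ with $\hat{x} - x^\star$ through the subgradient inequality $\vnorm{x^\star}_\A \geq \vnorm{\hat{x}}_\A + \langle g, x^\star - \hat{x}\rangle$; this route gives the same conclusion and discards the $-\tfrac{1}{2}\vnorm{e}_2^2$ slack automatically.
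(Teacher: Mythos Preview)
Your proposal is correct and is exactly the standard ``basic inequality'' argument that the paper has in mind: the paper does not reprove this proposition here but simply cites it from \cite[Lemma~2]{btr12}, where the same comparison of the AST objective at $\hat{x}$ versus $x^\star$ is used. Your remark about the real part of the complex inner product is the only bookkeeping point, and it is harmless in the paper's application since the proposition is immediately followed by taking $|\langle w,e\rangle|$ in the proof of Lemma~\ref{part3}.
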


\section{Proof of Lemma \ref{part2}}\label{apx:pf:I0I1}
Consider the polar form
\[
  \int_{N_j} \nu ( df)  =  \left| \int_{N_j} \nu ( df) \right| e^{i \theta_j} .
\]
Set $v_j = e^{-i \theta_j}$ and let $Q(f)$ be the dual polynomial promised by Theorem \ref{dual-stab} for this $v$. Then, we have 
\begin{align*}
  \left| \int_{N_j} \nu ( d f) \right| & = 
  \int_{N_j} e^{- i \theta_j} \nu ( d f)\\
  & =  \int_{N_j} Q ( f) \nu ( d f) + 
  \int_{N_j} (e^{- i \theta_j} -  Q ( f) ) \nu ( d f)
\end{align*}
Summing over $j=1,\ldots,k$ yields

\begin{align}
\nonumber I_0 &= \sum_{j=1}^k \left| \int_{N_j} \nu(df) \right|\\
\nonumber & = \sum_{j=1}^k\int_{N_j} Q ( f) \nu ( d f) + 
\sum_{j=1}^k \int_{N_j} ( v_j - Q ( f)) \nu ( d f)\\
\nonumber & \leq \left|\int_0^1 Q(f) \nu(df) \right| + \int_F|\nu|(df) + C_a' I_2, \text{ using triangle inequality and \eqref{q2}}\\
\label{i0} & \leq \frac{C k \tau}{n} + \int_F|\nu|(df) + C_a' I_2, \text{ using \eqref{qv}}.
\end{align}
We use a similar argument for bounding $I_1$ but this time use the dual polynomial $Q_1(f)$ guaranteed by Theorem \ref{dual-lin}. Again, start with the polar form
\[
  \int_{N_j} (f - f_j) \nu ( d f)  =  \left|
  \int_{N_j} (f - f_j) \nu ( d f) \right| e^{i \theta_j} = I_1^j e^{i\theta_j}/n
\]
Set $v_j = e^{-i \theta_j}$ in Theorem \ref{dual-lin} to obtain
{
\begin{align*}
  I_1^j & = 
  n\int_{N_j} e^{- i \theta_j} ( f - f_j) \nu ( d f)\\
  & =  n \int_{N_j} (v_j (
  f - f_j) - Q_1 ( f)) \nu ( d f)  + n\int_{N_j} Q_1 ( f) \nu ( d f)
\end{align*}
}
Summing over $j=1,\ldots,k$ yields
{
\begin{align}
\nonumber I_1 &= \sum_{j=1}^k I_1^j\\
\nonumber &= n \sum_{j=1}^k \int_{N_j} (v_j (
  f - f_j) - Q_1 ( f)) \nu ( d f) + n\sum_{j=1}^k\int_{N_j} Q_1 ( f) \nu ( d f)\\
\nonumber &\leq C_a^1 I_2 + n\left|\int_0^1 Q_1(f) \nu(df)\right| +  n\left |\int_F Q_1(f) \nu(df)\right |\\
\label{i1}& \leq C_a^1 I_2 + \frac{C k \tau}{n} +  C_b^1 \int_F|\nu|(df)
\end{align}
}
For the first inequality, we have used \eqref{ca1} and triangle inequality, and for the last inequality, we have used \eqref{q1v} and \eqref{cb1}. Equations \eqref{i0} and \eqref{i1} complete the proof.

\section{Proof of Lemma \ref{part3}}
\label{apx:pf:I2far}
Denote by $P_T(\nu)$ the projection of the difference measure $\nu$ on the support set $T = \{f_1, \ldots, f_k\}$ of $x^\star$ so that $P_T(\nu)$ is supported on $T$. Then, setting $Q(f)$ the polynomial in Theorem \ref{dual-stab} that interpolates the sign of $P_T( \nu)$, we have
{
\begin{align*}
  \| P_T ( \nu) \|_{\mathrm{TV}} & =  \int_0^1 Q ( f) P_T ( \nu) ( d f)\\
  & \leq  \left| \int_0^1 Q ( f) \nu ( d f) \right| + \left|
  \int_{T^c} Q ( f) \nu ( d f) \right|\\
  & \leq  \frac{C k \tau}{n} + \sum_{f_j \in T} \left|
  \int_{N_j / \{ f_j \}} Q ( f) \nu ( d f) \right| + \left|
  \int_F Q ( f) \nu ( d f) \right|,
\end{align*}}
where for the first inequality we used triangle inequality and for the last inequality we used \eqref{qv}. 
The integration over $F$ is can be bounded using H\"{o}lder's inequality
\[
  \left| \int_{F} Q ( f) \nu ( d f) \right|  
  \leq  ( 1 - C_b) \int_F |\nu|(df)
\]

We continue with
{
\begin{align*}
  \left| \int_{N_j / \{ f_j \}} Q ( f) \nu ( d f) \right| 
  & \leq  \left| \int_{N_j / \{ f_j \}} | Q ( f) | | \nu | ( d f) \right|\\
  & \leq  \int_{N_j / \{ f_j \}} ( 1 - \tfrac{1}{2}n^2 C_a ( f - f_j)^2) | \nu | ( d f)\\
  & \leq \int_{N_j / \{ f_j \}} | \nu | ( d f) - C_a I_2^j.
\end{align*}
}
As a consequence, we have
{
\begin{align*}
  \nonumber \| P_T ( \nu) \|_{\mathrm{TV}} & \leq  \frac{C k \tau}{n} + \sum_{f_j
  \in T} \int_{N_j / \{ f_j \}} | \nu | ( d f) - C_a
  I_2 + ( 1 - C_b) \int_F |\nu|(df) \nonumber\\
 & \leq  \frac{C k \tau}{n} + \underbrace{\sum_{f_j
  \in T} \int_{N_j / \{ f_j \}} | \nu | ( d f) + \int_F |\nu|(df)}_{\|P_{T^c}\|_{\mathrm{TV}}} - C_a
 I_2  - C_b \int_F |\nu|(df)
\end{align*}
}
or equivalently,
\begin{align}
  \label{eqn:lower}
\|P_{T^c}(\nu) \|_{\mathrm{TV}} - \|P_T(\nu)\|_{\mathrm{TV}} \geq C_a I_2 + C_b \int_F |\nu|(df) - \frac{C k\tau}{n}.
\end{align}

Now, we appeal to Proposition \ref{pro:optimality} and obtain
\[
\vnorm{\hat{x}}_\A \leq \vnorm{x^\star}_\A - \langle w, e \rangle/\tau
\]
and thus
\begin{equation}
\label{opt-cond}
\vnorm{\hat{\mu}}_{\mathrm{TV}} \leq \vnorm{\mu}_{\mathrm{TV}} + |\langle w, e \rangle|/\tau.
\end{equation}
Using Lemma~\ref{part1},
\begin{align}
\nonumber |\langle w, e \rangle| & = |\langle w, \int_0^1 a(f) \nu(df)  |\rangle\\
& = \left|\int_0^1  \left\langle w,  a(f)  \right\rangle \nu(df)\right|\\
\nonumber & \leq \vnorm{\left\langle w,  a(f)  \right\rangle}_\infty\left(\frac{C k \tau}{n} + I_0 + I_1 + I_2\right)\\
\label{w-expand}& \leq 2\eta^{-1} \tau \left(\frac{C k \tau}{n} + I_0 + I_1 + I_2\right)\nonumber \\
 & \leq C \eta^{-1} \tau \left(\frac{k\tau}{n} + I_2 + \int_F |\nu|(df) \right)
\end{align}
with high probability, where for the penultimate inequality we used our choice of $\tau$ and $\vnorm{\left\langle w,  a(f)  \right\rangle}_\infty \leq 2\eta^{-1} \tau$ with high probability, a fact shown in Appendix C of ~\cite{btr12}. 
Substituting \eqref{w-expand} in \eqref{opt-cond}, we get
\begin{align*}
& \vnorm{\mu}_{\mathrm{TV}} + C \eta^{-1} \tau \left(\frac{k\tau}{n} + I_2 + \int_F |\nu|(df) \right)\\
& \geq \vnorm{\hat{\mu}}_{\mathrm{TV}}\\
& = \vnorm{\mu + \nu}_{\mathrm{TV}}\\
& \geq \vnorm{\mu}_{\mathrm{TV}} - \vnorm{{P_T(\nu)}}_{\mathrm{TV}} + \vnorm{{P_{T^c}(\nu)}}_{\mathrm{TV}}\end{align*}
Canceling $\|\mu\|_{\mathrm{TV}}$ yields 
\begin{align}\label{eqn:upper}
\|P_{T^c}(\nu)\|_{\mathrm{TV}} - \|P_T(\nu)\|_{\mathrm{TV}} \leq C\eta^{-1}\tau \left(\frac{k \tau}{n} + I_2 + \int_F |\nu|(df)\right)
\end{align}
As a consequence of \eqref{eqn:lower} and \eqref{eqn:upper}, we get,
\[
  C(1+\eta^{-1}) \frac{k \tau}{n} \geq  ( C_b - \eta^{-1} C)  \int_F{|\nu|(df)} + ( C_a - \eta^{-1} C)I_2
\]
whence the result follows for large enough $\eta.$
\end{document}